\def\be{\begin{equation}}
\def\ee{\end{equation}}
\def\ba{\begin{array}}
\def\ea{\end{array}}
\def\qed{\leavevmode\unskip\penalty9999 \hbox{}\nobreak\hfill
     \quad\hbox{\leavevmode  \hbox to.77778em{%
               \hfil\vrule   \vbox to.675em%
               {\hrule width.6em\vfil\hrule}\vrule\hfil}}
     \par\vskip3pt}
\newtheorem{theorem}{Theorem}
\newtheorem{lemma}{Lemma}
\newtheorem{example}{Example}
\newcommand{\norm}[1]{\lVert#1\rVert}
\begin{document}
\title{\large\bf{Separability criteria based on realignment}}

\author{Yu Lu}
\email{2240501015@cnu.edu.cn}
\affiliation{School of Mathematical Sciences, Capital Normal University, Beijing 100048, China}
\author{Zhong-Xi Shen}
\email{18738951378@163.com}
\affiliation{School of Mathematical Sciences, Capital Normal University, Beijing 100048, China}
\author{Shao-Ming Fei}
\email{feishm@cnu.edu.cn}
\affiliation{School of Mathematical Sciences, Capital Normal University, Beijing 100048, China}
\author{Zhi-Xi Wang}
\email{wangzhx@cnu.edu.cn}
\affiliation{School of Mathematical Sciences, Capital Normal University, Beijing 100048, China}

\begin{abstract}
The detection of entanglement in a bipartite state is a crucial issue in quantum information science. Based on realignment of density matrices and the vectorization of the reduced density matrices, we introduce a new set of separability criteria. The proposed separability criteria can detect more entanglement than the previous separability criteria. Moreover, we provide new criteria for detecting the genuine tripartite entanglement and lower bounds for the concurrence and convex-roof extended negativity. The advantages of results are demonstrated through detailed examples.
\end{abstract}

\maketitle

\section{Introduction}

Quantum entanglement \cite{Mintert1679022004,Chen0405042005,Breuer0805012006,Vicente0523202007,Zhang0123342007} is central to the fields of quantum information processing and quantum computation \cite{Nielsen2000}. A primary issue in the field of quantum entanglement is to ascertain whether a quantum state is entangled or not. A bipartite state $\rho\in \mathcal{H}_{M}\otimes \mathcal{H}_{N}$ is said to be separable if it can be represented as a convex sum of tensor products of the states of subsystems,
\begin{equation}\label{sep}
\rho=\sum_{i}p_i\rho^{i}_{M}\otimes \rho^{i}_{N},
\end{equation}
where $p_i\geq 0$ and $\sum_ip_i=1$. Otherwise $\rho$ is said to be entangled.

Consequently, significant efforts have been focused on addressing the so-called separability problem. The most well-known one is the positive partial transpose (PPT) criterion \cite{Peres14131996,Horodecki2231996}, for high-dimensional states, the PPT criterion is only a necessary one. In Refs.\cite{lupo2008bipartite,li2011note}, The authors established separability criteria by analyzing the symmetric functions of the singular values of the realigned matrices, demonstrating their superiority over the existing realignment criterion. In Refs. \cite{Zhang0603012008,shen2015separability,zhang2017realignment,li2017detection},improved realignment criterions have been presented, capable of detecting quantum entanglement. In Ref.\cite{shi2023family}, the authors proposed a series of  separable criteria for bipartite states, which reduce to the improved realignment criterion for particular cases. Later, in Ref.\cite{qi2024detection}, based on the realigned matrix given in Ref.\cite{shi2023family}, a criterion for the separability of multipartite states has been introduced, which detects the genuine multipartite entangled (GME) states. In Ref.\cite{sun2024separability}, the authors improved results and gived the better realignment criterions than Shi $et$ $al.$ gived in Refs.\cite{shi2023family,qi2024detection}.

A key problem in the study of quantum entanglement is the challenge of quantifying entanglement. Various entanglement measures have been presented in recent years \cite{Horodecki2231996,guhne2009entanglement,lee2003convex,huber2013entropy,chen2016lower}. Concurrence and convex-roof extended negativity (CREN) are recognized as two prominent entanglement measures. In Refs. \cite{Chen0405042005,lee2003convex,Vicente0523202007,ma2011measure,li2017measure}, the authors presented improved lower bounds of concurrence and CREN.


In this paper, we introduce a set of separability criteria for bipartite systems, from which we derive tighter lower bounds of concurrence and CREN in Section \ref{S:2}. In Section \ref{S:3}, we generalize our separability criteria to multipartite systems. The criteria detects genuine multipartite entanglement as well as the multipartite full separability.  We have also derived a tighter lower bound for the GME concurrence. The advantages of results are demonstrated through detailed examples. We summarize and conclude in Section \ref{S:4}.


\section{Detection and measures of entanglement for bipartite states}\label{S:2}
\subsection{Separability criteria for bipartite systems}
Let $\mathbb{C}^{m \times n}$ be the set of all $m\times n$ matrices over complex field $\mathbb{C}$, and $\mathbb{R}$ be the real number field. For a matrix $A=\left[a_{i j}\right]\in \mathbb{C}^{m \times n}$, the vectorization of matrix $A$ is defined as
$\operatorname{Vec}(A)=\left(a_{11}, \!\cdots\!, a_{m 1}, a_{12}, \!\cdots\!, a_{m 2}, \!\cdots\!, a_{1 n}, \!\cdots\!, a_{m n}\right)^{T}$, where $T$ stands for the transpose.

Let $Z$ be an $m \times m$ block matrix with sub-blocks $Z_{i, j} \in \mathbb{C}^{n \times n}$, $i, j=1, \ldots, m$. The realigned matrix $\mathcal{R}(Z)$ of $Z$ is defined by
\begin{eqnarray}\notag
	\mathcal{R}(Z)=\left(\begin{array}{c}
		\operatorname{Vec}\left(Z_{1,1}\right)^{T} \\
		\vdots \\
		\operatorname{Vec}\left(Z_{m, 1}\right)^{T} \\
		\vdots \\
		\operatorname{Vec}\left(Z_{1, m}\right)^{T} \\
		\vdots \\
		\operatorname{Vec}\left(Z_{m, m}\right)^{T}
	\end{array}\right).
\end{eqnarray}

The realignment criterion \cite{chen2002matrix} says that any separable state $\rho$ in $\mathbb{C}^{d_{A}} \otimes \mathbb{C}^{d_{B}}$ satisfies
$\|\mathcal{R}(\rho)\|_\mathrm{T r} \leqslant 1$, where $\|A\|_\mathrm{T r}=\operatorname{T r}\left(\sqrt{A^{\dagger} A}\right)$ is the trace norm of $A$.

Then based on the realignment of $\rho_{AB}-\rho_A\otimes\rho_B$, Zhang $et$ $al.$ showed that for any separable state $\rho_{AB}$, the following inequality is valid,
\begin{align}
\|\mathcal{R}(\rho_{AB}-\rho_A\otimes\rho_B)\|_1\le \sqrt{1-\operatorname{T r}\rho_A^2}\sqrt{1-\operatorname{T r}\rho_B^2},
\end{align}
it is stronger than the CCNR criterion \cite{zhang2008entanglement}.\par
By using some parameters and the reduced density matrices of a bipartite state $\rho_{AB},$ Shen $et$ $al.$ \cite{shen2015separability} constructed
\begin{align}
\mathcal{N}_{\beta,l}^{G}(\rho)=\begin{pmatrix}
G&\beta \omega_l(\rho_B)^T\\
\beta\omega_l(\rho_A)&\mathcal{R}(\rho)
\end{pmatrix},
\end{align}
here $G-\beta^2 E_{l\times l}$ is positive semidefinte, $\beta\in \mathbb{R}$, and $\omega_l(X)$ means
\begin{align}
\omega_l(X)=(\underbrace{\mathrm{Vec}(X),\cdots, \mathrm{Vec}(X)}_{\text{$l$ columns}}).
\end{align}
There they showed that when $G-\beta^2 E_{l\times l}\ge 0$, then a separable state $\rho$ satisfies
\begin{align}
\norm{\mathcal{N}_{\beta,l}^{G}(\rho)}_1\le 1+\operatorname{T r}(G).\label{s}
\end{align}
\indent Shi $et$ $al.$ \cite{shi2023family} constructed $\mathcal{M}_{\alpha,\beta}(\rho_{AB})$ on a bipartite state $\rho_{AB}$ as
\begin{align}\label{t1}
\mathcal{M}_{\alpha,\beta}(\rho_{AB})=\begin{pmatrix}
	\alpha\beta& \alpha \mathrm{Vec}(\rho_B)^T\\
	\beta \mathrm{Vec}(\rho_A)& \mathcal{R}(\rho_{AB})
\end{pmatrix},
\end{align}
here $\alpha,\beta\in \mathbb{R}$, $\rho_A$ and $\rho_B$ are redeced density matrices of the $A$ and $B$ system, respectively.
They showed that $\rho_{AB}$ is a separable state, when\begin{align*}
 	\norm{\mathcal{M}_{\alpha,\beta}(\rho_{AB})}_{\mathrm{T r}}\le \sqrt{(\alpha^2+1)(\beta^2+1)}
 	\end{align*}

\indent Sun $et$ $al.$ \cite{sun2024separability} constructed $\mathcal{M}_{\alpha,\beta}^{l}(\rho_{AB})$ on a bipartite state $\rho_{AB}$ as \begin{eqnarray}\label{t2}
	\mathcal{M}_{\alpha, \beta}^{l}(\rho)=\left(\begin{array}{cc}
		\alpha \beta E_{l \times l} & \alpha \omega_{l}\left(\operatorname{T r}_{A}(\rho)\right)^{T} \\
		\beta \omega_{l}\left(\operatorname{T r}_{B}(\rho)\right) & \mathcal{R}(\rho)
	\end{array}\right),
\end{eqnarray}
where $\alpha$ and $\beta$ are arbitrary real numbers, $l$ is a natural number, $E_{l \times l}$ is the matrix with all $l \times l$ elements being $1$, $\operatorname{T r}_{A}$ is the partial trace over the subsystem $A$.
They showed that $\rho\in\mathbb{C}^{d_{A}} \otimes \mathbb{C}^{d_{B}}$ is separable, when
\begin{eqnarray}\notag
\left\|\mathcal{M}_{\alpha, \beta}^{l}\left(\rho\right)\right\|_{\mathrm{T r}} \leq \sqrt{\left(l \alpha^{2}+1\right)\left(l \beta^{2}+1\right)},
\end{eqnarray}

In this manuscript, we denote $\mathcal{Q}_{\mu,\nu}(\rho_{AB})$ on a bipartite state $\rho_{AB}$ as
\begin{eqnarray}\label{eq:m1}
	\mathcal{Q}_{\mu, \nu}(\rho)=\left(\begin{array}{cc}
		\mu\nu^T & \mu \operatorname{Vec}(\rho^i_B)^T \\
		 \operatorname{Vec}(\rho^i_A)\nu^T& \mathcal{R}(\rho)
	\end{array}\right),
\end{eqnarray}
where $\mu=(u_1,...,u_{n})^T$ and $\nu=(v_1,...,v_{m})^T$, where
$u_i$ $(i=1,...,n)$ and $v_j$ $(j=1,...,m)$ are given real numbers, $m$ and $n$ are positive integers.
If $m=n=1,$ let $\alpha = \mu, \beta = \nu$ the equation (\ref{eq:m1}) has degenerated into the matrix (\ref{t1}) constructed in reference \cite{shi2023family}.
If $m=n$ and $u_1=u_2=\cdots = u_{n} = \alpha,v_1=v_2=\cdots = v_{m} = \beta$, the equation (\ref{eq:m1}) has degenerated into the matrix (\ref{t2}) constructed in reference \cite{sun2024separability}

Concerning $\mathcal{Q}_{\mu, \nu}(\rho)$ we have the following lemma, see proof in Appendix A.

\begin{lemma} \label{P:1} $\left(1\right)$ For any $\rho_{i}\in\mathbb{C}^{d_{A}}\!\otimes \mathbb{C}^{d_{B}}$ and $k_{i}\in \mathbb{R} $, $i=1,2, \ldots, n$, such that $\sum\limits_{i=1}^{n} k_{i}=1$, we have
$$
\mathcal{Q}_{\mu, \nu}\left(\sum_{i=1}^{n} k_{i} \rho_{i}\right)=\sum_{i=1}^{n} k_{i} \mathcal{Q}_{\mu, \nu}\left(\rho_{i}\right).
$$
$\left(2\right)$ Let $U$ and $V$ be unitary matrices on subsystems of A and B, respectively. For any $\rho\in\mathbb{C}^{d_{A}} \otimes \mathbb{C}^{d_{B}}$, we have $\left\| \mathcal{Q}_{\mu, \nu}\left((U \otimes V) \rho(U \otimes V)^{\dagger}\right)\left\|_{\mathrm{T r}}=\right\| \mathcal{Q}_{\mu, \nu}\left(\rho\right) \|_{\mathrm{T r}}\right.$.
\end{lemma}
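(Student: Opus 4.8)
The plan is to prove part (1) by simply tracking how each block of $\mathcal{Q}_{\mu,\nu}$ depends on $\rho$. Writing $\rho = \sum_i k_i \rho_i$ with $\sum_i k_i = 1$, I would note three facts: first, the top-left block $\mu\nu^T$ is constant in $\rho$, and since $\sum_i k_i = 1$ it equals $\sum_i k_i\,\mu\nu^T$; second, the partial trace is linear, so $\rho_A = \operatorname{Tr}_B(\rho) = \sum_i k_i \operatorname{Tr}_B(\rho_i)$ and likewise for $\rho_B$, and both $\operatorname{Vec}(\cdot)$ and the outer product with the fixed vectors $\mu,\nu$ are linear, so the off-diagonal blocks $\mu\operatorname{Vec}(\rho_B)^T$ and $\operatorname{Vec}(\rho_A)\nu^T$ distribute over the sum; third, the realignment map $\mathcal{R}$ is linear in its matrix argument (it only permutes and reshapes entries), so $\mathcal{R}(\rho) = \sum_i k_i \mathcal{R}(\rho_i)$. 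Assembling these block-by-block gives $\mathcal{Q}_{\mu,\nu}(\sum_i k_i\rho_i) = \sum_i k_i \mathcal{Q}_{\mu,\nu}(\rho_i)$. There is essentially no obstacle here; the only thing to be careful about is the role of the normalization $\sum_i k_i = 1$, which is exactly what makes the constant block behave correctly.

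**Part (2): Unitary invariance of the trace norm.**

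For part (2) the strategy is to exhibit two unitary matrices $P$ and $Q$ (acting on the row space and column space of $\mathcal{Q}_{\mu,\nu}$ respectively) such that $\mathcal{Q}_{\mu,\nu}\big((U\otimes V)\rho(U\otimes V)^\dagger\big) = P\,\mathcal{Q}_{\mu,\nu}(\rho)\,Q$; since the trace norm is invariant under left and right multiplication by unitaries, the claim follows immediately. The key computational input is the well-known identity $\operatorname{Vec}(AXB^T) = (B\otimes A)\operatorname{Vec}(X)$, together with the analogous transformation rule for the realigned matrix: there exist unitaries $W_1, W_2$ with $\mathcal{R}\big((U\otimes V)\rho(U\otimes V)^\dagger\big) = (\bar U\otimes U)\,\mathcal{R}(\rho)\,(\bar V\otimes V)^\dagger$ (this is the standard behavior of realignment under local unitaries, used already in the realignment criterion literature cited in the excerpt). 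I would also use that the reduced states transform as $\operatorname{Tr}_B\big((U\otimes V)\rho(U\otimes V)^\dagger\big) = U\rho_A U^\dagger$ and similarly $\operatorname{Tr}_A(\cdots) = V\rho_B V^\dagger$, so that $\operatorname{Vec}(U\rho_A U^\dagger) = (\bar U\otimes U)\operatorname{Vec}(\rho_A)$ and $\operatorname{Vec}(V\rho_B V^\dagger) = (\bar V\otimes V)\operatorname{Vec}(\rho_B)$.

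With these in hand I would take $P = I_1 \oplus (\bar U\otimes U)$ and $Q = I_1 \oplus (\bar V\otimes V)^\dagger$, block-diagonal unitaries where the leading $I_1$ acts on the one-dimensional slot indexed by $\mu\nu^T$ (here $\mu,\nu$ are single fixed vectors, so that block is a single scalar column/row pattern — more precisely one treats the first block-row and block-column as the $\mu$- and $\nu$-indexed parts and the identity there leaves $\mu\nu^T$, $\mu\operatorname{Vec}(\rho_B)^T$ and $\operatorname{Vec}(\rho_A)\nu^T$ correctly conjugated on the $\mathcal{R}(\rho)$ side only). Checking each of the four blocks of $P\,\mathcal{Q}_{\mu,\nu}(\rho)\,Q$ against $\mathcal{Q}_{\mu,\nu}\big((U\otimes V)\rho(U\otimes V)^\dagger\big)$ is then a routine verification using the identities above. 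The main obstacle — and the point that needs the most care — is getting the realignment-under-local-unitaries identity with the correct placement of complex conjugates and making sure the same unitaries $\bar U\otimes U$ and $\bar V\otimes V$ that conjugate $\mathcal{R}(\rho)$ are exactly the ones that act correctly on $\operatorname{Vec}(\rho_A)$ and $\operatorname{Vec}(\rho_B)$; once the conventions are pinned down, the block-matrix bookkeeping closes the proof.
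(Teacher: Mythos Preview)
Your proposal is correct and follows essentially the same approach as the paper: part (1) is proved block-by-block using linearity of $\operatorname{Vec}$, partial trace, and $\mathcal{R}$ together with $\sum_i k_i=1$ for the constant block, and part (2) is proved by exhibiting block-diagonal unitaries so that $\mathcal{Q}_{\mu,\nu}(\sigma)=\widetilde U\,\mathcal{Q}_{\mu,\nu}(\rho)\,\widetilde V$ via the identity $\operatorname{Vec}(AXB)=(B^T\otimes A)\operatorname{Vec}(X)$. The only slip is dimensional: since $\mu\in\mathbb{R}^n$ and $\nu\in\mathbb{R}^m$, the identity blocks in your $P$ and $Q$ must be $I_n$ and $I_m$ (not $I_1$), exactly as in the paper's $\widetilde U=I\oplus(\bar U\otimes U)$ and $\widetilde V=(I\oplus(V\otimes\bar V))^\dagger$.
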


By Lemma \ref{P:1} we have the following separability criterion.

\begin{theorem}\label{th:1} If a state $\rho\in\mathbb{C}^{d_{A}} \otimes \mathbb{C}^{d_{B}}$ is separable, then
\begin{eqnarray}\notag
\left\|\mathcal{Q}_{\mu, \nu}\left(\rho\right)\right\|_{\mathrm{T r}} \leq \sqrt{\left( |\mu|^{2}+1\right)\left(|\nu|^{2}+1\right)},
\end{eqnarray}
where $\mathcal{Q}_{\mu,\nu}\left(\rho\right)$ is defined in (\ref{eq:m1}).
\end{theorem}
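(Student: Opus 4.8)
The plan is to reduce the general separable state to the case of a single product state via the linearity in part (1) of Lemma~\ref{P:1} and the triangle inequality for the trace norm, then bound the single-product-state contribution directly. Concretely, if $\rho=\sum_i p_i\,\rho_A^{(i)}\otimes\rho_B^{(i)}$ with $p_i\ge 0$ and $\sum_i p_i=1$, then by Lemma~\ref{P:1}(1) we have $\mathcal{Q}_{\mu,\nu}(\rho)=\sum_i p_i\,\mathcal{Q}_{\mu,\nu}(\rho_A^{(i)}\otimes\rho_B^{(i)})$, so by the triangle inequality $\|\mathcal{Q}_{\mu,\nu}(\rho)\|_{\mathrm{Tr}}\le\sum_i p_i\,\|\mathcal{Q}_{\mu,\nu}(\rho_A^{(i)}\otimes\rho_B^{(i)})\|_{\mathrm{Tr}}$. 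Hence it suffices to show that for any pure product state $\sigma=\sigma_A\otimes\sigma_B$ one has $\|\mathcal{Q}_{\mu,\nu}(\sigma_A\otimes\sigma_B)\|_{\mathrm{Tr}}\le\sqrt{(|\mu|^2+1)(|\nu|^2+1)}$; the convex combination then inherits the bound since $\sum_i p_i=1$.

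For the single product state, the key structural fact is that $\mathcal{R}(\sigma_A\otimes\sigma_B)=\operatorname{Vec}(\sigma_A)\operatorname{Vec}(\sigma_B)^{T}$, a rank-one matrix. Writing $a=\operatorname{Vec}(\sigma_A)$ and $b=\operatorname{Vec}(\sigma_B)$, and noting that for a product state the ``reduced density matrices'' appearing in the off-diagonal blocks are exactly $\sigma_A$ and $\sigma_B$, the matrix $\mathcal{Q}_{\mu,\nu}(\sigma_A\otimes\sigma_B)$ becomes
\begin{eqnarray}\notag
\left(\begin{array}{cc}
\mu\nu^{T} & \mu\, b^{T}\\
a\,\nu^{T} & a\,b^{T}
\end{array}\right)
=\left(\begin{array}{c}\mu\\ a\end{array}\right)\left(\begin{array}{c}\nu\\ b\end{array}\right)^{T},
\end{eqnarray}
which is again rank one. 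For a rank-one matrix $xy^{T}$ the trace norm is exactly $\|x\|\,\|y\|$, so the bound reduces to
$\sqrt{|\mu|^2+\|a\|^2}\,\sqrt{|\nu|^2+\|b\|^2}\le\sqrt{(|\mu|^2+1)(|\nu|^2+1)}$.
This follows because $\|a\|^2=\|\operatorname{Vec}(\sigma_A)\|^2=\operatorname{Tr}(\sigma_A^2)\le 1$ for a density matrix (with equality for pure states), and likewise $\|b\|^2\le 1$.

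I would carry out the steps in this order: (i) invoke Lemma~\ref{P:1}(1) and the triangle inequality to reduce to a product state; (ii) compute $\mathcal{R}(\sigma_A\otimes\sigma_B)=\operatorname{Vec}(\sigma_A)\operatorname{Vec}(\sigma_B)^T$ and identify the off-diagonal blocks, thereby exhibiting $\mathcal{Q}_{\mu,\nu}(\sigma_A\otimes\sigma_B)$ as an outer product; (iii) use the rank-one trace-norm identity $\|xy^T\|_{\mathrm{Tr}}=\|x\|\|y\|$; (iv) bound $\operatorname{Tr}(\sigma_A^2),\operatorname{Tr}(\sigma_B^2)\le 1$ and finish. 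The main obstacle is step (ii): one must verify carefully that the realignment map sends the product $\sigma_A\otimes\sigma_B$ to the outer product $\operatorname{Vec}(\sigma_A)\operatorname{Vec}(\sigma_B)^T$ with the block/index conventions used in the definition of $\mathcal{R}$, and that the vectorizations in the border of $\mathcal{Q}_{\mu,\nu}$ combine with the rows and columns of $\mathcal{R}(\sigma_A\otimes\sigma_B)$ so that the whole matrix genuinely factors as a single outer product $\begin{pmatrix}\mu\\ a\end{pmatrix}\begin{pmatrix}\nu\\ b\end{pmatrix}^T$ rather than merely having rank-one blocks. Once the outer-product structure is confirmed, the remaining estimates are elementary. (One subtlety worth a remark: since general mixed $\rho_A^{(i)},\rho_B^{(i)}$ can themselves be decomposed into pure states, it is cleanest to assume from the start that the $\rho_A^{(i)},\rho_B^{(i)}$ in the separable decomposition are pure, or alternatively to run step (ii)--(iv) directly with mixed $\sigma_A,\sigma_B$, which still works verbatim because $\operatorname{Tr}(\sigma_A^2)\le 1$ holds for all density matrices.)
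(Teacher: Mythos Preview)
Your proposal is correct and follows essentially the same route as the paper's proof: reduce via Lemma~\ref{P:1}(1) and the triangle inequality to a single product state, use $\mathcal{R}(\sigma_A\otimes\sigma_B)=\operatorname{Vec}(\sigma_A)\operatorname{Vec}(\sigma_B)^T$ to write $\mathcal{Q}_{\mu,\nu}(\sigma_A\otimes\sigma_B)$ as the rank-one outer product $\begin{pmatrix}\mu\\ \operatorname{Vec}(\sigma_A)\end{pmatrix}\begin{pmatrix}\nu\\ \operatorname{Vec}(\sigma_B)\end{pmatrix}^T$, and then bound its trace norm. The only cosmetic difference is that the paper takes the $\rho_A^i,\rho_B^i$ to be pure from the outset and uses $\operatorname{Tr}(\rho_A^i)^2=\operatorname{Tr}(\rho_B^i)^2=1$ as an equality, whereas you note (correctly) that the argument works verbatim with mixed factors via $\operatorname{Tr}(\sigma^2)\le 1$.
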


\begin{proof}
Since $\rho $ is separable, it can be written as a convex combination of pure states,
$\rho=\sum_{i} p_{i} \rho_{A}^{i} \otimes \rho_{B}^{i}$, where $p_{i} \in[0,1]$ with $\sum\limits_{i} p_{i}=1$, $\rho_{A}^{i}$ and $\rho_{B}^{i}$ are pure states of the subsystems $A$ and $B$, respectively. From Lemma \ref{P:1} we have
\begin{eqnarray}\label{eq:4}\notag
&& \left\|\mathcal{Q}_{\mu, \nu}\left(\rho\right)\right\|_{\mathrm{T r}}\\\notag&=&\left\|\sum_{i} p_{i} \mathcal{Q}_{\mu, \nu}\left(\rho_{A}^{i} \otimes \rho_{B}^{i}\right)\right\|_{\mathrm{T r}} \\&\leq& \sum_{i} p_{i}\left\|\mathcal{Q}_{\mu, \nu}\left(\rho_{A}^{i} \otimes \rho_{B}^{i}\right)\right\|_{\mathrm{T r}}.
\end{eqnarray}

Since for any $A\in \mathbb{C}^{m \times n}$ and $B \in \mathbb{C}^{p \times q}$,
\begin{eqnarray}\label{eq:r4}
\mathcal{R}(A \otimes B)=\operatorname{Vec}(A)\operatorname{Vec}(B)^{T},
\end{eqnarray}
we obtain from the definition of $\mathcal{Q}_{\mu,\nu}\left(\rho\right)$,
	\begin{eqnarray}\label{eq:6}\notag
			& & \left\|\mathcal{Q}_{\mu, \nu}\left(\rho_A^i \otimes \rho_B^i\right)\right\|_{\mathrm{T r}} \\\notag
			&=&\left\|\left(\begin{array}{cc}
				\mu \nu^T& \mu \operatorname{Vec}\!\left(\!\operatorname{T r}_A\left(\rho_A^i \otimes \rho_B^i \!\right)\!\right)^T \\
				\operatorname{Vec}\left(\!\operatorname{T r}_B\left(\rho_A^i \otimes \rho_B^i\!\right)\right)\nu^T & \mathcal{R}\left(\rho_A^i \otimes \rho_B^i\!\right)
			\end{array}\right)\right\|_{\mathrm{T r}} \\\notag
			& =&\left\|\left(\begin{array}{cc}
				\mu \nu^T& \mu \operatorname{Vec}\left(\rho_B^i\right)^T \\
				\operatorname{Vec}\left(\rho_A^i\right)\nu^T & \operatorname{Vec}\left(\rho_A^i\right) \operatorname{Vec}\left(\rho_B^i\right)^T
			\end{array}\right)\right\|_{\mathrm{T r}} \\
			& =&\|\left(\begin{array}{c}
				\mu\\
				\operatorname{Vec}\left(\rho_A^i\right)
			\end{array}\right)\left(\begin{array}{cc}
				\nu^T \quad\left.\operatorname{Vec}\left(\rho_B^i\right)^T\right)
			\end{array} \|_{\mathrm{T r}}.\right.
	\end{eqnarray}
As $\rho_{A}^{i}$ and $\rho_{B}^{i}$ are pure states, i.e., $\operatorname{T r}\left(\rho_A^i\right)^2=\operatorname{T r}\left(\rho_B^i\right)^2=1$, we have
	\begin{eqnarray}\label{eq:27}\notag
			& &\left\|\left(\begin{array}{c}
				\mu \\
				\operatorname{Vec}\left(\rho_A^i\right)
			\end{array}\right)\left(\nu^T\quad \operatorname{Vec}\left(\rho_B^i\right)^T\right)\right\|_{\mathrm{T r}} \\\notag
			& =&\sqrt{|\nu|^{2}+1}\operatorname{T r}\left(\left(\begin{array}{c}
				\mu  \\
				\operatorname{Vec}\left(\rho_A^i\right)
			\end{array}\right)\left(\mu \quad \operatorname{Vec}\left(\rho_A^i\right)^T\right)\right)^{\frac{1}{2}} \\
			& =&\sqrt{(|\nu|^{2}+1)} \sqrt{(|\mu|^{2}+1)}.
	\end{eqnarray}
Combining (\ref{eq:4}), (\ref{eq:6}) and (\ref{eq:27}) we get
	\begin{eqnarray}\notag
			&&\left\|\mathcal{Q}_{\mu, \nu}\left(\rho\right)\right\|_{\mathrm{T r}}\\\notag& \leq& \sum_{i} p_{i}\left\|\mathcal{Q}_{\mu, \nu}\left(\rho_{A}^{i} \otimes \rho_{B}^{i}\right)\right\|_{\mathrm{T r}}
			\\\notag&=&\sqrt{(|\nu|^{2} +1} )\sqrt{(|\mu|^{2}+1)},
	\end{eqnarray}
which completes the proof.
\end{proof}

We provide two examples to illustrate the Theorem \ref{th:1}.

\begin{example}\label{ex:1}
Consider the state $\rho_{x}=x\left|\xi \right\rangle\left\langle\xi\right|+ \left(1-x\right)\rho_{d}$, where $\rho_{d}$ is the $2 \times 4$ bound entangled state,
$$
	\rho_{d}=\frac{1}{1+7d}
	\begin{pmatrix}
		0 & d & 0 & 0 & 0 & 0 & d & 0 \\
		0 & 0 & d & 0 & 0 & 0 & 0 & d \\
		0 & 0 & 0 & d & 0 & 0 & 0 & 0 \\
		0 & 0 & 0 & 0 & \frac{1+d}{2} & 0 & 0 & \frac{\sqrt{1-d^2}}{2} \\
		d & 0 & 0 & 0 & 0 & d & 0 & 0\\
		0 & d & 0 & 0 & 0 & 0 & d & 0\\
		0 & 0 & d & 0 & \frac{\sqrt{1-d^2}}{2} & 0 & 0 & \frac{1+d}{2} \\
	\end{pmatrix},
$$
where $0 \textless d\textless 1$, and $\left|\xi \right\rangle=\frac{1}{\sqrt{2}}(|00\rangle+|11\rangle)$.
\end{example}

Set $d=0.9$ and take
\begin{align}
\mu =( 11.9967, 12.9195, 11.6808, 12.1705, 11.4476)^T,\\
\nu = (12.5025, 11.5102, 12.0119, 12.3982, 12.7818)^T,
 \end{align}
 by direct calculation we get from Theorem \ref{th:1} that $\rho_{x}$ is entangled for $0.232959 \leq x \leq 1$. Let
 \begin{align}
  \mu =(11.66, 11.66, 11.66, 11.66, 11.66)^T,\\
  \nu = (11.75, 11.75, 11.75, 11.75, 11.75)^T,
 \end{align}
 we will get the results of $\rho_{x}$ is entangled for $0.233889 \leq x \leq 1$ from Theorem $1$ in Ref.\cite{sun2024separability}. Set $\mu = 11.66$ and $\nu = 11.75$, we will get the results of $\rho_{x}$ is entangled for $0.233931 \leq x \leq 1$ from Theorem $1$ in Ref.\cite{shi2023family}. Obviously our Theorem \ref{th:1} detects the entanglement of the state $\rho_{x}$ better than them.

\begin{example}\label{ex:2}
Consider the mixture of the bound entangled state proposed by Horodecki \cite{horodecki1997separability},
	$$
	\rho_{t}=\frac{1}{1+8t}
	\begin{pmatrix}
		t & 0 & 0 & 0 & t & 0 & 0 & 0 & t\\
		0 & t & 0 & 0 & 0 & 0 & 0 & 0 & 0\\
		0 & 0 & t & 0 & 0 & 0 & 0 & 0 & 0\\
		0 & 0 & 0 & t & 0 & 0 & 0 & 0 & 0 \\
		t & 0 & 0 & 0 & t & 0 & 0 & 0 &  t\\
		0 & 0 & 0 & 0 & 0 & t & 0 & 0 & 0\\
		0 & 0 & 0 & 0 & 0 & 0 & \frac{1+t}{2} & 0 & \frac{\sqrt{1-t^2}}{2}\\
0 & 0 & 0 & 0 & 0 & 0 & 0 & t & 0\\
		t & 0 & 0 & 0 & t & 0 & \frac{\sqrt{1-t^2}}{2} & 0 & \frac{1+t}{2}\\
	\end{pmatrix}
	$$
and the $9\times9$ identity matrix $I_9$,
$$
\rho_{p}=\frac{1-p}{9} I_{9}+p \rho_{t}.
$$
\end{example}

We take $\mu=(\frac{37}{16},2 ,2,2,2,2,2,2,2,2)^T$ and $\nu=(\frac{47}{20},2 ,2,2,2,2,2,2,2,2)^T$. We compare among the results from our Theorem \ref{th:1}, realignment criterion from Theorem $1$ in Ref.\cite{shi2023family} and Ref.\cite{sun2024separability}  for different values of $t$, see Table \ref{tab:1}. Table \ref{tab:1} shows that our Theorem \ref{th:1} detects better the entanglement of the state $\rho_{p}$ than the criteria from \cite{shi2023family} and \cite{sun2024separability}
\setlength{\tabcolsep}{15pt}
\begin{center}
	\begin{table*}[ht]
\caption{Entanglement of the state $\rho_{p}$ in Example \ref{ex:2} for different values of $t$}
		\label{tab:1}
		\begin{tabular}{cccc}
			\hline\noalign{\smallskip}
		$t$ & Theorem $1$ in \cite{shi2023family} &realignment in \cite{sun2024separability}  & Our Theorem \ref{th:1}  \\
		\noalign{\smallskip}\hline\noalign{\smallskip}
		0.2 & $0.9943 \leq p \leq 1$ & $0.9942  \leq p \leq 1$ & $0.9940  \leq p \leq 1$ \\
		
		0.4 & $0.9948  \leq p \leq 1$ & $0.9947  \leq p \leq 1$ & $0.9946  \leq p \leq 1$ \\
		
		0.6 & $0.9964  \leq p \leq 1$ & $0.9963  \leq p \leq 1$ & $0.99625 \leq p \leq 1$ \\
		
		0.8 & $0.9982  \leq p \leq 1$ & $0.99815  \leq p \leq 1$ & $0.99813  \leq p \leq 1$ \\
		
		0.9 & $0.9991  \leq p \leq 1$ & $0.99908  \leq p \leq 1$ & $0.99907  \leq p \leq 1$ \\
		\noalign{\smallskip}\hline
		\end{tabular}
	\end{table*}
\end{center}

\subsection{Lower bounds of concurrence and CREN for bipartite states}
The concurrence of a pure state $|\varphi\rangle\in\mathbb{C}^{d_{A}} \otimes \mathbb{C}^{d_{B}}$ is defined by \cite{rungta2001universal}
\begin{eqnarray}\label{C:1}
	C\left(|\varphi\rangle\right)=\sqrt{2\left(1-\operatorname{T r} \left(\rho_{A}^{2}\right)\right)},
\end{eqnarray}
where $\rho_{A}=\operatorname{T r}_{B}\left(|\varphi\rangle\langle\varphi|\right)$. The concurrence of a mixed state $\rho$ is defined as
\begin{eqnarray}\label{C:2}
	C\left(\rho\right)=\min _{\left\{p_{i}, \left|\varphi_{i}\right\rangle\right\}} \sum_{i} p_{i} C\left(\left|\varphi_{i}\right\rangle\right),
\end{eqnarray}
where the minimum is taken over all possible ensemble decompositions of
$\rho=\sum\limits_{i}p_{i}\left|\varphi_{i}\right\rangle\left\langle\varphi_{i}\right|$, $p_{i} \geqslant 0$ with $\quad\sum\limits_{i}p_{i}=1$.

The CREN  of a pure state $|\varphi\rangle\in \mathbb{C}^{d_{A}} \otimes \mathbb{C}^{d_{B}}$ is defined by \cite{lee2003convex}
\begin{eqnarray}\notag
	 \mathcal{N}\left(|\varphi\rangle\right)
=\frac{\left\|\left(|\varphi\rangle\langle\varphi|\right)^{T_{B}}\right\|_\mathrm{T r}-1}{d-1},
\end{eqnarray}
where $d=\min \left(d_A, d_B\right)$, $\left(|\varphi\rangle\langle\varphi|\right)^{T_{B}}$ denotes the partial transpose of $|\varphi\rangle\langle\varphi|$. For a mixed state $\rho$, its CREN is defined via convex roof extension,
\begin{eqnarray}\label{N:2}
	\mathcal{N}\left(\rho\right)=\min _{\left\{p_{i},\left|\varphi_{i}\right\rangle\right\}} \sum_{i} p_{i} \mathcal{N}\left(\left|\varphi_{i}\right\rangle\right),
\end{eqnarray}
where the minimum is taken over all possible pure state decompositions of $\rho$.

To derive the lower bounds of concurrence and CREN for arbitrary density matrices, we first present the following lemma, see proof in Appendix B.

\begin{lemma}\label{le:2} Let $|\varphi\rangle$ be a pure bipartite state in systems A and B, with Schmidt decomposition $|\varphi\rangle=\sum\limits_{i=0}^{d-1} \sqrt{\lambda_{i}}\left|i_{A} i_{B}\right\rangle$, where $d=\min \left(d_{A}, d_{B}\right)$. Then

\noindent(1)
$\left\|\mathcal{Q}_{\mu, \nu}\left(|\varphi\rangle\langle\varphi|\right)\right\|_\mathrm{T r}
\\\leq \sqrt{\left(|\mu|^{2}+1\right)\left(| \nu|^{2}+1\right)} +2\sum\limits_{0 \leq i<j \leq d-1} \sqrt{\lambda_{i} \lambda_{j}}$,

\noindent(2)	
	$ \left\|\mathcal{Q}_{\mu, \nu}(|\varphi\rangle\langle\varphi|)\right\|_{\mathrm{T r}}\leq \sqrt{\left(|\mu|^{2}+1\right)\left(| \nu|^{2}+1\right)}+(d-1)$.
\end{lemma}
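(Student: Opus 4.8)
The plan is to compute $\mathcal{Q}_{\mu,\nu}(|\varphi\rangle\langle\varphi|)$ explicitly in the Schmidt basis and then estimate its trace norm by splitting it into a "diagonal" rank-one piece (which yields the $\sqrt{(|\mu|^2+1)(|\nu|^2+1)}$ term, exactly as in Theorem~\ref{th:1}) plus an "off-diagonal" remainder that we bound by its trace norm term-by-term. First I would write $|\varphi\rangle\langle\varphi| = \sum_{i,j}\sqrt{\lambda_i\lambda_j}\,|i_Ai_B\rangle\langle j_Aj_B|$, so that $\rho_A = \sum_i \lambda_i |i_A\rangle\langle i_A|$ and $\rho_B = \sum_i \lambda_i |i_B\rangle\langle i_B|$ are diagonal; hence $\operatorname{Vec}(\rho_A)$ and $\operatorname{Vec}(\rho_B)$ are supported only on the "diagonal" coordinates. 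Using $\mathcal{R}(A\otimes B)=\operatorname{Vec}(A)\operatorname{Vec}(B)^T$ from (\ref{eq:r4}) applied blockwise, $\mathcal{R}(|\varphi\rangle\langle\varphi|) = \sum_{i,j}\sqrt{\lambda_i\lambda_j}\,\mathcal{R}(|i_A\rangle\langle j_A|\otimes |i_B\rangle\langle j_B|) = \sum_{i,j}\sqrt{\lambda_i\lambda_j}\,\operatorname{Vec}(|i_A\rangle\langle j_A|)\operatorname{Vec}(|i_B\rangle\langle j_B|)^T$. The $i=j$ part of this sum is precisely $\operatorname{Vec}(\rho_A)\operatorname{Vec}(\rho_B)^T$, which together with the border blocks $\mu\nu^T$, $\mu\operatorname{Vec}(\rho_B)^T$, $\operatorname{Vec}(\rho_A)\nu^T$ assembles into the rank-one matrix $\big(\begin{smallmatrix}\mu\\ \operatorname{Vec}(\rho_A)\end{smallmatrix}\big)\big(\nu^T\ \operatorname{Vec}(\rho_B)^T\big)$, exactly the object whose trace norm was shown in (\ref{eq:27})-style computation to equal $\sqrt{(|\mu|^2+1)}\sqrt{(|\nu|^2+1)}$ when $\operatorname{Tr}\rho_A^2=\operatorname{Tr}\rho_B^2=1$ — but here the pure-state identity $\operatorname{Tr}\rho_A^2 = \sum_i\lambda_i^2 = 1$ need not hold, so I would instead note $|\operatorname{Vec}(\rho_A)|^2 = \sum_i\lambda_i^2 \le 1$ and bound its norm by $\sqrt{(|\mu|^2+1)(|\nu|^2+1)}$ directly, or simply replace the diagonal block's exact contribution and absorb the slack — the cleanest route is to keep the full $i=j$ block and bound $\|(\begin{smallmatrix}\mu\\ v\end{smallmatrix})(\nu^T\ w^T)\|_{\mathrm{Tr}} = \sqrt{|\mu|^2+|v|^2}\sqrt{|\nu|^2+|w|^2}$ with $|v|^2=|w|^2=\sum_i\lambda_i^2\le 1$.

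Next, for the remainder $S := \sum_{i\ne j}\sqrt{\lambda_i\lambda_j}\,\operatorname{Vec}(|i_A\rangle\langle j_A|)\operatorname{Vec}(|i_B\rangle\langle j_B|)^T$, embedded as the lower-right block with zero border, I would use the triangle inequality for the trace norm. Each term $\operatorname{Vec}(|i_A\rangle\langle j_A|)\operatorname{Vec}(|i_B\rangle\langle j_B|)^T$ is rank one with trace norm equal to $|\operatorname{Vec}(|i_A\rangle\langle j_A|)|\cdot|\operatorname{Vec}(|i_B\rangle\langle j_B|)| = 1\cdot 1 = 1$ (since these are normalized vectors in the respective $\operatorname{Vec}$ spaces), so $\|S\|_{\mathrm{Tr}} \le \sum_{i\ne j}\sqrt{\lambda_i\lambda_j} = 2\sum_{0\le i<j\le d-1}\sqrt{\lambda_i\lambda_j}$. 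Adding a zero border does not change the trace norm. Combining via $\|\mathcal{Q}_{\mu,\nu}(|\varphi\rangle\langle\varphi|)\|_{\mathrm{Tr}} \le \|\text{rank-one diagonal part}\|_{\mathrm{Tr}} + \|S\|_{\mathrm{Tr}}$ gives part~(1). For part~(2), I would bound $2\sum_{i<j}\sqrt{\lambda_i\lambda_j}$ by $(d-1)$: this is the standard AM-GM / Cauchy-Schwarz estimate, since $2\sum_{i<j}\sqrt{\lambda_i\lambda_j} \le \sum_{i<j}(\lambda_i+\lambda_j) = (d-1)\sum_i\lambda_i = d-1$, using that each index $i$ appears in exactly $d-1$ unordered pairs and $\sum_i\lambda_i=1$.

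The main obstacle I anticipate is the splitting step: justifying that the trace norm of the assembled block matrix is at most the sum of the trace norms of the "diagonal+border" rank-one piece and the "off-diagonal" piece $S$. This needs the triangle inequality applied to a decomposition $\mathcal{Q}_{\mu,\nu}(|\varphi\rangle\langle\varphi|) = M_1 + M_2$ where $M_1$ carries all four border blocks plus the $i=j$ diagonal of $\mathcal{R}$, and $M_2$ is the purely off-diagonal $\mathcal{R}$-remainder bordered by zeros; one must check these two actually sum to $\mathcal{Q}_{\mu,\nu}$, which is immediate from the block structure, and then apply $\|M_1+M_2\|_{\mathrm{Tr}}\le\|M_1\|_{\mathrm{Tr}}+\|M_2\|_{\mathrm{Tr}}$. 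A secondary subtlety is making sure the $\operatorname{Vec}$ of each Schmidt rank-one operator $|i_A\rangle\langle j_A|$ really is a unit vector and that distinct $(i,j)$ give orthogonal such vectors (so the term-by-term trace-norm bound is not wasteful), but only the norm bound is strictly needed for the inequality. Everything else is a routine computation parallel to the proof of Theorem~\ref{th:1}.
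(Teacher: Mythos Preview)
Your overall architecture---split $\mathcal{Q}_{\mu,\nu}(|\varphi\rangle\langle\varphi|)$ into a ``diagonal plus border'' piece $M_1$ and an ``off-diagonal'' piece $M_2$, then bound each trace norm separately---matches the paper exactly, and your treatment of $M_2$ and of part~(2) is correct. But there is a genuine error in your handling of $M_1$.

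You assert that the $i=j$ part of $\mathcal{R}(|\varphi\rangle\langle\varphi|)=\sum_{i,j}\sqrt{\lambda_i\lambda_j}\,\operatorname{Vec}(|i_A\rangle\langle j_A|)\operatorname{Vec}(|i_B\rangle\langle j_B|)^T$ equals $\operatorname{Vec}(\rho_A)\operatorname{Vec}(\rho_B)^T$, and hence that $M_1$ is the rank-one matrix $\binom{\mu}{\operatorname{Vec}(\rho_A)}(\nu^T\ \operatorname{Vec}(\rho_B)^T)$. This is false: the $i=j$ part is $\sum_i \lambda_i\, e^A_{ii}(e^B_{ii})^T$ (with $e^A_{ii}=\operatorname{Vec}(|i_A\rangle\langle i_A|)$), whereas $\operatorname{Vec}(\rho_A)\operatorname{Vec}(\rho_B)^T=\sum_{i,j}\lambda_i\lambda_j\, e^A_{ii}(e^B_{jj})^T$. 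These coincide only when $|\varphi\rangle$ is a product state. In general $M_1$ has rank $d$, not rank one, so your formula $\sqrt{|\mu|^2+|v|^2}\sqrt{|\nu|^2+|w|^2}$ does not apply, and the bound $\|M_1\|_{\mathrm{Tr}}\le\sqrt{(|\mu|^2+1)(|\nu|^2+1)}$ is left unjustified.

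The paper's fix is short: observe that $M_1=\mathcal{Q}_{\mu,\nu}(\Omega_1)$ where $\Omega_1=\sum_i\lambda_i\,|i_Ai_B\rangle\langle i_Ai_B|$ is a \emph{separable} state (its reduced states are exactly $\rho_A,\rho_B$, so the border blocks match, and $\mathcal{R}(\Omega_1)$ is the $i=j$ diagonal block). Then Theorem~\ref{th:1} applied to $\Omega_1$ gives $\|M_1\|_{\mathrm{Tr}}\le\sqrt{(|\mu|^2+1)(|\nu|^2+1)}$ directly. With this correction your argument goes through; the paper in fact notes that $\|M_1+M_2\|_{\mathrm{Tr}}=\|M_1\|_{\mathrm{Tr}}+\|M_2\|_{\mathrm{Tr}}$ (equality, since the row and column supports of $M_1$ and $M_2$ are disjoint), though only the triangle inequality is needed for the lemma.
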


According to the above lemma, we have

\begin{theorem} \label{th:2}
For a bipartite state $\rho\in\mathbb{C}^{d_{A}} \otimes \mathbb{C}^{d_{B}}$, the concurrence satisfies that
\begin{eqnarray}\notag
C \!\left(\rho\right)\geq \frac{\sqrt{2}}{\sqrt{d(d-1)}}\!\!\left(\left\|\mathcal{Q}_{\mu, \nu}\left(\rho\right)\right\|_\mathrm{T r}\!\! - \!\sqrt{\left(|\mu|^{2}+1\!\right)\left(|\nu|^{2}+1\!\right)}\right),
\end{eqnarray}
where $d=\min \left(d_{A}, d_{B}\right)$.
\end{theorem}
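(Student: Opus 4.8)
The plan is to bound $C(\rho)$ from below by a convex-roof argument, reducing everything to the pure-state estimate in Lemma \ref{le:2}(1). First I would take an optimal pure-state decomposition $\rho=\sum_i p_i|\varphi_i\rangle\langle\varphi_i|$ realizing the minimum in \eqref{C:2}, so that $C(\rho)=\sum_i p_i C(|\varphi_i\rangle)$. Applying the triangle inequality and part (1) of Lemma \ref{P:1} (linearity of $\mathcal{Q}_{\mu,\nu}$) gives
$$
\left\|\mathcal{Q}_{\mu,\nu}(\rho)\right\|_{\mathrm{Tr}}\le \sum_i p_i\left\|\mathcal{Q}_{\mu,\nu}(|\varphi_i\rangle\langle\varphi_i|)\right\|_{\mathrm{Tr}}.
$$
Now Lemma \ref{le:2}(1) bounds each summand by $\sqrt{(|\mu|^2+1)(|\nu|^2+1)}+2\sum_{k<l}\sqrt{\lambda_k^{(i)}\lambda_l^{(i)}}$, where $\lambda^{(i)}$ are the Schmidt coefficients of $|\varphi_i\rangle$. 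Subtracting $\sqrt{(|\mu|^2+1)(|\nu|^2+1)}$ from both sides yields
$$
\left\|\mathcal{Q}_{\mu,\nu}(\rho)\right\|_{\mathrm{Tr}}-\sqrt{(|\mu|^2+1)(|\nu|^2+1)}\le \sum_i p_i\cdot 2\!\!\sum_{0\le k<l\le d-1}\!\!\sqrt{\lambda_k^{(i)}\lambda_l^{(i)}}.
$$

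The remaining step is the standard pointwise inequality relating the cross-term sum $2\sum_{k<l}\sqrt{\lambda_k\lambda_l}$ to the concurrence $C(|\varphi\rangle)=\sqrt{2(1-\sum_k\lambda_k^2)}=\sqrt{2\sum_{k\neq l}\lambda_k\lambda_l}$. Using Cauchy–Schwarz on the $\binom{d}{2}$ terms,
$$
2\sum_{k<l}\sqrt{\lambda_k\lambda_l}\le \sqrt{2\binom{d}{2}}\,\sqrt{2\sum_{k<l}\lambda_k\lambda_l}=\sqrt{d(d-1)}\,\sqrt{\sum_{k\neq l}\lambda_k\lambda_l}=\sqrt{\tfrac{d(d-1)}{2}}\,C(|\varphi\rangle),
$$
so each term in the right-hand sum is at most $\sqrt{d(d-1)/2}\,C(|\varphi_i\rangle)$. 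Substituting back and using $C(\rho)=\sum_i p_i C(|\varphi_i\rangle)$ gives
$$
\left\|\mathcal{Q}_{\mu,\nu}(\rho)\right\|_{\mathrm{Tr}}-\sqrt{(|\mu|^2+1)(|\nu|^2+1)}\le \sqrt{\tfrac{d(d-1)}{2}}\sum_i p_i C(|\varphi_i\rangle)=\sqrt{\tfrac{d(d-1)}{2}}\,C(\rho),
$$
and rearranging produces the claimed bound with the prefactor $\sqrt{2}/\sqrt{d(d-1)}$.

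I expect the main (and only real) obstacle to be the Cauchy–Schwarz step converting $\sum_{k<l}\sqrt{\lambda_k\lambda_l}$ into $C(|\varphi\rangle)$; one must be careful that the combinatorial factor $\binom{d}{2}$ matches the $d(d-1)$ appearing in the theorem, and that $d=\min(d_A,d_B)$ is exactly the number of Schmidt coefficients so that no extra terms appear. Everything else — linearity, the triangle inequality, and the reduction to an optimal decomposition — is routine once Lemma \ref{le:2} is in hand. A subtle point worth stating explicitly is that the inequality from Lemma \ref{le:2}(1) holds for \emph{every} pure state, hence in particular for each $|\varphi_i\rangle$ in the chosen decomposition, which is what licenses the termwise application inside the convex sum.
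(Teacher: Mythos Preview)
Your proof is correct and follows essentially the same route as the paper: establish the pure-state bound from Lemma~\ref{le:2}(1) together with the inequality $2\sum_{k<l}\sqrt{\lambda_k\lambda_l}\le\sqrt{d(d-1)/2}\,C(|\varphi\rangle)$, then pass to mixed states via an optimal decomposition, Lemma~\ref{P:1}(1), and the triangle inequality. The only cosmetic difference is that the paper quotes the Schmidt-coefficient inequality from \cite{chen2005concurrence} rather than deriving it by Cauchy--Schwarz as you do; the resulting constant and the overall argument are identical.
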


\begin{proof}
For any $|\varphi\rangle\in\mathbb{C}^{d_{A}} \otimes \mathbb{C}^{d_{B}}$ with Schmidt form $|\varphi\rangle=\sum\limits_{i=0}^{d-1} \sqrt{\lambda_{i}}\left|i_{A} i_{B}\right\rangle$, one has \cite{chen2005concurrence}
	\begin{eqnarray}\notag
		C^{2}(|\varphi\rangle) \geq \frac{8}{d(d-1)}\left(\sum_{0 \leq i<j \leq d-1} \sqrt{\lambda_{i} \lambda_{j}}\right)^{2}.
	\end{eqnarray}
	From (1) of Lemma \ref{le:2}, we get
	\begin{eqnarray}\notag
			&& \ \ \ \ C(|\varphi\rangle) \\\notag&&\geq \frac{2 \sqrt{2}}{\sqrt{d(d-1)}} \sum_{0 \leq i<j \leq d-1} \sqrt{\lambda_{i} \lambda_{j}} \\\notag
			&& \geq \!\!\frac{\sqrt{2}}{\sqrt{d(d-1)}}\!\left(\!\left\|\mathcal{Q}_{\mu, \nu}\!\left(|\varphi\rangle\langle\varphi|\right)\right\|_\mathrm{T r}\!\!\!\!
 - \!\sqrt{\left(|\mu|^{2}\!+\!1\right)\!\left(| \nu|^{2}\!+\!1\right)}\!\right).
	\end{eqnarray}

Let $\left\{p_{i},\left|\varphi_{i}\right\rangle\right\}$ be the optimal decomposition of $\rho$ such that $C\left(\rho\right)=\sum\limits_{i} p_{i} C\left(\left|\varphi_{i}\right\rangle\right)$. From Lemma \ref{P:1} we have
	\begin{eqnarray}\notag
		\begin{aligned}
			C\left(\rho\right)&=\sum_{i} p_{i} C\left(\left|\varphi_{i}\right\rangle\right) \\
			&\geq \frac{\sqrt{2}}{\sqrt{d(d-1)}} \sum_{i} p_{i}\left(\left\|\mathcal{Q}_{\mu, \nu}\left(\left|\varphi_{i}\right\rangle\left\langle\varphi_{i}\right|\right)\right\|_{\mathrm{T r}}\right.\\&\left.\quad-\sqrt{\left(|\mu|^{2}+1\right)\left(| \nu|^{2}+1\right)}\right) \\
			& \geq \frac{\sqrt{2}}{\sqrt{d(d-1)}}\left(\left\|\mathcal{Q}_{\mu, \nu}\left(\rho\right)\right\|_{\mathrm{T r}}\right.\\&\left.\quad-\sqrt{\left(|\mu|^{2}+1\right)\left(|\mu|^{2}+1\right)}\right).
		\end{aligned}
	\end{eqnarray}
Therefore, Theorem \ref{th:2} holds.
\end{proof}

\begin{theorem}\label{th:3} For the CREN of any state $\rho\in\mathbb{C}^{d_{A}} \otimes \mathbb{C}^{d_{B}}$, we have	
\begin{eqnarray}\notag
		\mathcal{N}\left(\rho\right) \geq \frac{\left\|\mathcal{Q}_{\mu, \nu}\left(\rho\right)\right\|_{\mathrm{T r}}-\sqrt{\left(|\mu|^{2}+1\right)\left(| \nu|^{2}+1\right)}}{d-1},
	\end{eqnarray}
where $d=\min \left(d_{A}, d_{B}\right)$.
\end{theorem}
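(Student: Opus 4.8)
The plan is to mirror the proof of Theorem \ref{th:2}, replacing the concurrence inequality with the corresponding bound for the CREN. The key input is part (2) of Lemma \ref{le:2}, which states that for a pure state $|\varphi\rangle$ with Schmidt decomposition $|\varphi\rangle=\sum_{i=0}^{d-1}\sqrt{\lambda_i}|i_A i_B\rangle$ we have $\|\mathcal{Q}_{\mu,\nu}(|\varphi\rangle\langle\varphi|)\|_{\mathrm{Tr}}\le\sqrt{(|\mu|^2+1)(|\nu|^2+1)}+(d-1)\sum_{0\le i<j\le d-1}\sqrt{\lambda_i\lambda_j}$ — more precisely I would want the sharper pure-state estimate that the realignment-type norm exceeds the baseline $\sqrt{(|\mu|^2+1)(|\nu|^2+1)}$ by at most $2\sum_{i<j}\sqrt{\lambda_i\lambda_j}$ (this is part (1) of the lemma). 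The first step is therefore to record the well-known fact that for a pure bipartite state the CREN equals $\mathcal{N}(|\varphi\rangle)=\frac{1}{d-1}\bigl(\|(|\varphi\rangle\langle\varphi|)^{T_B}\|_{\mathrm{Tr}}-1\bigr)=\frac{2}{d-1}\sum_{0\le i<j\le d-1}\sqrt{\lambda_i\lambda_j}$, since $\|(|\varphi\rangle\langle\varphi|)^{T_B}\|_{\mathrm{Tr}}=\bigl(\sum_i\sqrt{\lambda_i}\bigr)^2=1+2\sum_{i<j}\sqrt{\lambda_i\lambda_j}$.

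The second step is to combine this identity with part (1) of Lemma \ref{le:2}: from $\|\mathcal{Q}_{\mu,\nu}(|\varphi\rangle\langle\varphi|)\|_{\mathrm{Tr}}\le\sqrt{(|\mu|^2+1)(|\nu|^2+1)}+2\sum_{i<j}\sqrt{\lambda_i\lambda_j}$ and $2\sum_{i<j}\sqrt{\lambda_i\lambda_j}=(d-1)\mathcal{N}(|\varphi\rangle)$ we immediately obtain, for every pure state,
\begin{equation}\notag
\mathcal{N}(|\varphi\rangle)\ \ge\ \frac{\|\mathcal{Q}_{\mu,\nu}(|\varphi\rangle\langle\varphi|)\|_{\mathrm{Tr}}-\sqrt{(|\mu|^2+1)(|\nu|^2+1)}}{d-1}.
\end{equation}

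The third step is the convex-roof argument, identical in structure to the one in Theorem \ref{th:2}. Let $\{p_i,|\varphi_i\rangle\}$ be an optimal decomposition achieving $\mathcal{N}(\rho)=\sum_i p_i\mathcal{N}(|\varphi_i\rangle)$. Applying the pure-state bound to each $|\varphi_i\rangle$, summing against $p_i$, and then using the triangle inequality together with the linearity in part (1) of Lemma \ref{P:1} — namely $\sum_i p_i\|\mathcal{Q}_{\mu,\nu}(|\varphi_i\rangle\langle\varphi_i|)\|_{\mathrm{Tr}}\ge\|\sum_i p_i\mathcal{Q}_{\mu,\nu}(|\varphi_i\rangle\langle\varphi_i|)\|_{\mathrm{Tr}}=\|\mathcal{Q}_{\mu,\nu}(\rho)\|_{\mathrm{Tr}}$ — and the normalization $\sum_i p_i=1$ to collapse the constant term, yields the claimed inequality for $\mathcal{N}(\rho)$.

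The only genuine obstacle is making sure the pure-state estimate is applied with the correct constant: one must use the $2\sum_{i<j}\sqrt{\lambda_i\lambda_j}$ form (part (1) of Lemma \ref{le:2}) rather than the weaker $(d-1)$ form (part (2)), because it is precisely the factor $2$ that matches the definition of CREN through $\|(|\varphi\rangle\langle\varphi|)^{T_B}\|_{\mathrm{Tr}}-1=2\sum_{i<j}\sqrt{\lambda_i\lambda_j}$; everything else is a direct transcription of the proof of Theorem \ref{th:2}. I would also note in passing that the bound is only informative when $\|\mathcal{Q}_{\mu,\nu}(\rho)\|_{\mathrm{Tr}}>\sqrt{(|\mu|^2+1)(|\nu|^2+1)}$, consistent with Theorem \ref{th:1}.
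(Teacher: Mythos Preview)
Your proposal is correct and follows essentially the same route as the paper: establish the pure-state bound by combining the identity $\mathcal{N}(|\varphi\rangle)=\frac{2}{d-1}\sum_{i<j}\sqrt{\lambda_i\lambda_j}$ with part~(1) of Lemma~\ref{le:2}, then extend to mixed states via the optimal convex-roof decomposition together with Lemma~\ref{P:1}(1) and the triangle inequality. The paper's proof is identical in structure, so there is nothing to add.
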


\begin{proof}
For any pure state $|\varphi\rangle\in\mathbb{C}^{d_{A}} \otimes \mathbb{C}^{d_{B}}$ with Schmidt form $|\varphi\rangle=\sum\limits_{i=0}^{d-1} \sqrt{\lambda_{i}}\left|i_{A} i_{B}\right\rangle$, one has \cite{vidal2002computable}
	\begin{eqnarray}\notag
		\mathcal{N}\left(|\varphi\rangle\right)=\frac{2}{d-1} \sum_{0 \leq i<j \leq d-1} \sqrt{\lambda_{i} \lambda_{j}}.
	\end{eqnarray}
Using (1) of Lemma \ref{le:2}, we get
	\begin{eqnarray}\notag
		&& 2 \sum_{0 \leq i<j \leq d-1} \sqrt{\lambda_{i} \lambda_{j}} \\\notag&\geq&\left\|\mathcal{Q}_{\mu, \nu}\left(|\varphi\rangle\langle\varphi|\right)\right\|_{\mathrm{T r}}-\sqrt{\left(|\mu|^{2}+1\right)\left(|\nu|^{2}+1\right)}.
	\end{eqnarray}
Let $\left\{p_{i},\left|\varphi_{i}\right\rangle\right\}$ be the optimal pure state decomposition of $\rho$ such that $\mathcal{N}(\rho)=\sum\limits_{i} p_{i} \mathcal{N}\left(\left|\varphi_{i}\right\rangle\right)$. Then using Lemma 1 we obtain
	\begin{eqnarray}\notag
	\begin{aligned}
			\mathcal{N}\left(\rho\right)&=\sum_{i} p_{i} \mathcal{N}\left(\left|\varphi_{i}\right\rangle\right) \\
			&\geq \sum_{i} p_{i} \frac{\left\|\mathcal{Q}_{\mu, \nu}\left(\left|\varphi_{i}\right\rangle\left\langle\varphi_{i}\right|\right)\right\|_{\mathrm{T r}}-\sqrt{\left(|\mu|^{2}+1\right)\!\left(|\nu|^{2}+1\right)}}{d-1} \\
			&\geq \frac{\left\|\mathcal{Q}_{\mu, \nu}\left(\rho\right)\right\|_{\mathrm{T r}}-\sqrt{\left(|\mu|^{2}+1\right)\left(| \nu|^{2}+1\right)}}{d-1},
		\end{aligned}
	\end{eqnarray}
which completes the proof.
\end{proof}

\begin{example}
The following $3 \times 3$ $P P T$ entangled state was introduced in Ref. \cite{bennett1999unextendible},
$$
\rho=\frac{1}{4}\left(I_{9}-\sum_{i=0}^{4}
\left|\varphi_{i}\right\rangle\left\langle\varphi_{i}\right|\right),
$$
where
$$
\begin{aligned}
	& \left|\varphi_{0}\right\rangle=\frac{\sqrt{2}|0\rangle(|0\rangle-|1\rangle)}{2}, \\
	& \left|\varphi_{1}\right\rangle=\frac{\sqrt{2}(|0\rangle-|1\rangle)|2\rangle}{2}, \\
	& \left|\varphi_{2}\right\rangle\frac{\sqrt{2}|2\rangle(|1\rangle-|2\rangle)}{2}, \\
	& \left|\varphi_{3}\right\rangle=\frac{\sqrt{2}(|1\rangle-|2\rangle)|0\rangle}{2}, \\
	& \left|\varphi_{4}\right\rangle=\frac{(|0\rangle+|1\rangle+|2\rangle)(|0\rangle+|1\rangle+|2\rangle) }{3}.
\end{aligned}
$$
\end{example}
When choosing $\mu=(1,1)^T, \nu=(1,0)^T$ according to Theorem \ref{th:2}, $C(\rho_{AB})\ge 0.04407.$ By using the theorem 6 in Ref.\cite{shi2023family}, we have $C(\rho)\ge 0.05399$, hence our bound is better than them. \par
Next we consider a state by mixing $\rho$ with the white noise,
Let us consider the mixture of $\rho$ with white noise,
$$
\rho_{t}=\frac{1-t}{9} I_{9}+t \rho.
$$
We take $\mu = (\frac{2227}{347},\frac{4236}{571},\frac{2233}{345})^T, \nu= (\frac{6819}{1093}, \frac{1580}{219}, \frac{2491}{361})^T$, which shows that $\rho_{t}$ is entangled for $0.88221 \leq t \leq\ 1$. Theorem $2$ in Ref. \cite{sun2024separability}, from which $\rho_{t}$ is entangled for  $0.88248 \leq t \leq\ 1$. If $m = n = 1,$ and $\mu = \nu = 1$, $\rho_{t}$ is entangled for $0.88438 \leq t \leq\ 1$, which is the results of Theorem $6$ in Ref.\cite{shi2023family}, hence our bound is better than them.

\section{Detection and measures of multipartite entanglement}\label{S:3}
\subsection{Separability criteria for multipartite states}

We first consider tripartite case. Denote the three bipartitions of a tripartite quantum state $\rho\in\mathbb{C}^{d_{1}} \otimes \mathbb{C}^{d_{2}} \otimes \mathbb{C}^{d_{3}}$ as $1|23$, $2|13$ and $3|12$. If a tripartite state $\rho$ is biseparable \cite{jing2022criteria}, then
\begin{eqnarray}\notag
	\begin{aligned}
	&\rho=\sum_i p_i|\varphi_i\rangle^{1|23}\langle\varphi_i|+\sum_j p_j|\varphi_j\rangle^{2|13}\langle\varphi_j|\\&\quad\quad+\sum_k p_k|\varphi_k\rangle^{3|12}\langle\varphi_k|,
	\end{aligned}
\end{eqnarray}
where $p_i, p_j, p_k \geq 0$, with $\sum\limits_i p_i+\sum\limits_j p_j+\sum\limits_k p_k=1$. Otherwise, $\rho$ is called genuinely tripartite entangled.
We define
\begin{eqnarray}\label{M:1} \mathscr{Q}(\rho)\!\!=\!\frac{1}{3}\left(\left\|\mathscr{Q}_{1|23}
\!\left(\rho\right)\!\right\|_{\mathrm{T r}}\!\!+\!\left\|\mathscr{Q}_{2|13}\!\left(\rho\right)\!\right\|_{\mathrm{T r}}\!\!+\!\left\|\mathscr{Q}_{3|12}\!\left(\rho\right)\!\right\|_{\mathrm{T r}}\!\right),
\end{eqnarray}
where $\mathscr{Q}_{i|jk}$ stands for the matrix (\ref{eq:m1}) under bipartition $i$ and $jk$, $\{i, j, k\}=\{1,2,3\}$.

\begin{theorem}\label{th:6}
If a tripartite state $\rho\in\mathbb{C}^{d_{1}} \otimes \mathbb{C}^{d_{2}} \otimes \mathbb{C}^{d_{3}}$ is biseparable, then
	\begin{eqnarray}\notag
		\mathscr{Q}(\rho) \leq \sqrt{\left(|\mu|^{2}+1\right)\left(| \nu|^{2}+1\right)}+\frac{2(d-1)}{3},
	\end{eqnarray}
where $d_i=d$ $\left(i=1,2,3\right)$.
\end{theorem}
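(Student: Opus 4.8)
The plan is to reduce the biseparable case to the pure-state estimates already available from Lemma \ref{le:2}, bipartition by bipartition, and then average. First I would write a biseparable $\rho$ in its decomposition
$\rho=\sum_i p_i|\varphi_i\rangle^{1|23}\langle\varphi_i|+\sum_j p_j|\varphi_j\rangle^{2|13}\langle\varphi_j|+\sum_k p_k|\varphi_k\rangle^{3|12}\langle\varphi_k|$,
and estimate each of the three terms $\|\mathscr{Q}_{r|st}(\rho)\|_{\mathrm{T r}}$ separately. By the triangle inequality for the trace norm and part (1) of Lemma \ref{P:1} (linearity of $\mathscr{Q}$), each $\|\mathscr{Q}_{r|st}(\rho)\|_{\mathrm{T r}}$ is bounded above by the corresponding convex combination of $\|\mathscr{Q}_{r|st}(|\varphi_\bullet\rangle\langle\varphi_\bullet|)\|_{\mathrm{T r}}$ over all three families of pure states.

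The key observation is a case split on how each pure state sits relative to the bipartition $r|st$. A state $|\varphi\rangle^{r|st}$ that is a product across the cut $r|st$ contributes, by the same computation as in the proof of Theorem \ref{th:1}, exactly $\sqrt{(|\mu|^2+1)(|\nu|^2+1)}$ to $\|\mathscr{Q}_{r|st}(\cdot)\|_{\mathrm{T r}}$; a state $|\varphi\rangle^{s|rt}$ or $|\varphi\rangle^{t|rs}$ that is product across a \emph{different} cut is, viewed as a bipartite state across $r|st$, still a genuine bipartite pure state, so part (2) of Lemma \ref{le:2} applies and bounds its contribution by $\sqrt{(|\mu|^2+1)(|\nu|^2+1)}+(d-1)$. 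Summing over the ensemble with weights $p_i,p_j,p_k$ that total $1$, one gets
\begin{eqnarray}\notag
\left\|\mathscr{Q}_{r|st}(\rho)\right\|_{\mathrm{T r}}\leq \sqrt{(|\mu|^2+1)(|\nu|^2+1)}+(d-1)\!\!\sum_{\text{cuts}\neq r|st}\!\! p_{\bullet},
\end{eqnarray}
where the sum of the weights over the two ``wrong'' cuts is $1-(\text{weight on }r|st)$; write $q_r$ for the total weight of the pure states that are product across the cut $r|st$, so this reads $\|\mathscr{Q}_{r|st}(\rho)\|_{\mathrm{T r}}\leq \sqrt{(|\mu|^2+1)(|\nu|^2+1)}+(d-1)(1-q_r)$.

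Now I average over the three cuts. Summing the three inequalities and dividing by $3$ gives
\begin{eqnarray}\notag
\mathscr{Q}(\rho)\leq \sqrt{(|\mu|^2+1)(|\nu|^2+1)}+\frac{d-1}{3}\bigl(3-(q_1+q_2+q_3)\bigr).
\end{eqnarray}
Since $q_1+q_2+q_3\geq q_1^{(i)}+q_2^{(j)}+q_3^{(k)}$ where the contributions bookkeep that \emph{each} pure state in the biseparable decomposition is product across at least one of the three cuts, one has $q_1+q_2+q_3\geq \sum_i p_i+\sum_j p_j+\sum_k p_k=1$, whence $3-(q_1+q_2+q_3)\leq 2$ and the stated bound $\mathscr{Q}(\rho)\leq \sqrt{(|\mu|^2+1)(|\nu|^2+1)}+\tfrac{2(d-1)}{3}$ follows.

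The main obstacle — and the step needing the most care — is the bookkeeping in the averaging argument: one must make sure that a pure state appearing in the $r|st$-family is counted in $q_r$ (it \emph{is} product across that cut), so that each of the weights $p_i$, $p_j$, $p_k$ is absorbed into \emph{some} $q$, giving $q_1+q_2+q_3\geq 1$. A secondary subtlety is justifying that for a bipartite pure state the rank-one-plus-low-rank matrix $\mathscr{Q}_{r|st}(|\varphi\rangle\langle\varphi|)$ genuinely obeys part (2) of Lemma \ref{le:2} with the same $d=\min(d_r,d_{st})$; since $d_{st}\geq d_r$ in general and we have assumed $d_i=d$ for all $i$, the relevant minimum is $d$, matching the statement. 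With those points pinned down the rest is the triangle inequality and arithmetic.
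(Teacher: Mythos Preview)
Your proposal is correct and uses the same ingredients as the paper's proof: Theorem~\ref{th:1} for the cut across which a pure component is a product, Lemma~\ref{le:2}(2) for the other two cuts, and convexity of the trace norm together with Lemma~\ref{P:1}(1). The only cosmetic difference is order of operations: the paper first shows $\mathscr{Q}(|\varphi\rangle\langle\varphi|)\leq\sqrt{(|\mu|^2+1)(|\nu|^2+1)}+\tfrac{2(d-1)}{3}$ for each biseparable \emph{pure} state (one cut gives $\sqrt{(|\mu|^2+1)(|\nu|^2+1)}$, the other two give this plus $d-1$) and then applies convexity once, whereas you apply convexity to each $\|\mathscr{Q}_{r|st}(\rho)\|_{\mathrm{T r}}$ first and then average, which forces the extra $q_r$ bookkeeping; both routes are equivalent and your handling of $q_1+q_2+q_3\ge 1$ and of $d=\min(d_r,d_{st})=d$ is sound.
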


\begin{proof}
Let $|\varphi\rangle\in\mathbb{C}^{d_{1}} \otimes \mathbb{C}^{d_{2}} \otimes \mathbb{C}^{d_{3}}$ be biseparable under bipartition $1|23$, i.e., $|\varphi\rangle=\left|\varphi_{1}\right\rangle \otimes\left|\varphi_{23}\right\rangle$. From Theorem \ref{th:1} and (2) of Lemma \ref{le:2} we have	
\begin{eqnarray}\label{M:2}
		\begin{aligned}
			& \quad\mathscr{Q}(|\varphi\rangle\langle\varphi|)\\
			& =\frac{1}{3}\left(\left\|\mathscr{Q}_{1|23}\left(|\varphi\rangle\langle\varphi|\right)\right\|_{\mathrm{T r}}+\left\|\mathscr{Q}_{2|13}\left(|\varphi\rangle\langle\varphi|\right)\right\|_{\mathrm{T r}}\right.\\&\left.\quad+\left\|\mathscr{Q}_{3|12}\left(|\varphi\rangle\langle\varphi|\right)\right\|_{\mathrm{T r}}\right) \\
			& \leq \sqrt{\left(|\mu|^{2}+1\right)\left(| \nu|^{2}+1\right)}+\frac{2(d-1)}{3},
		\end{aligned}
	\end{eqnarray}
which holds also for biseparable states under bipartitions $2|13$ and $3|12$.
	
For biseparable mixed state, $\rho=\sum_{i} p_{i}\left|\varphi_{i}\right\rangle\left\langle\varphi_{i}\right|$, $p_{i} \geq 0$, $\sum_{i} p_{i}=1$, we have
	\begin{eqnarray}\notag
		\mathscr{Q}(\rho)\leq\sum_{i} p_{i}\mathscr{Q}(|\varphi_{i}\rangle\langle\varphi_{i}|),
	\end{eqnarray}
which gives rise to that
	\begin{eqnarray}\notag
		\begin{aligned}
			\mathscr{Q}(\rho)
			& \leq \sum_{i} p_{i}\left( \sqrt{\left(|\mu|^{2}+1\right)\left(| \nu|^{2}+1\right)}+\frac{2(d-1)}{3}\right) \\
			& =\sqrt{\left(|\mu|^{2}+1\right)\left(| \nu|^{2}+1\right)}+\frac{2(d-1)}{3}.
		\end{aligned}
	\end{eqnarray}
by using from (\ref{M:2}).
\end{proof}

We provide an example to illustrate the Theorem \ref{th:6}.
\begin{example}
Consider the state $\rho_{W}^{q}$ in $\mathbb{C}^3 \otimes \mathbb{C}^3 \otimes \mathbb{C}^3$,
	$$
	\rho_{W}^{q}=\frac{1-q}{27} I_{27}+q\left|\varphi_{W}\right\rangle\left\langle\varphi_{W}\right|,
	$$
	where $\left|\varphi_{W}\right\rangle=\frac{1}{\sqrt{6}}(|001\rangle+|010\rangle+|100\rangle+|112\rangle+|121\rangle+|211\rangle)$, $I_{27}$ is the $27 \times 27$ identity matrix.
\end{example}

In this case $d=3$. When $\mu=(1,2)^T$ and $\nu=(2,1)^T$, Theorem \ref{th:6} detects the genuine tripartite entanglement of $\rho_{W}^{q}$ for $0.805132 \leq q \leq 1$. When $\mu=\nu=(\sqrt{2},\sqrt{2})^T$, Theorem \ref{th:6} reduces to the Theorem $4$ in Ref.\cite{sun2024separability}, which detects the genuine tripartite entanglement of $\rho_{W}^{q}$ for $0.805211 \leq q \leq 1$. When $\mu=\nu= 1$, Theorem \ref{th:6} reduces to the Theorem $2$ in Ref.\cite{qi2024detection}, which detects the genuine tripartite entanglement of $\rho_{W}^{q}$ for $0.805321 \leq q \leq 1$. Obviously, our Theorem \ref{th:6} is more effective in detecting the genuine tripartite entanglement. $\\$

Next we consider the fully separability of general multipartite states. Any multipartite state $\rho\in\mathbb{C}^{d_{1}} \otimes \mathbb{C}^{d_{2}} \otimes \cdots \otimes \mathbb{C}^{d_{n}}$ can be written as
\begin{eqnarray}\notag
	\rho=\sum_{i} Y_{1}^{i} \otimes Y_{2}^{i} \otimes \cdots \otimes Y_{n}^{i},
\end{eqnarray}
where $Y_{j}^{i} \in \mathbb{C}^{d_{j} \times d_{j}}$, $j=1,2, \cdots, n$. We define
\begin{eqnarray}\label{eq:13}\notag
&&\mathcal{Q} \mathcal{R}_{\mu_{q}, \ldots, \mu_{n}}^{l, q}(\rho)\\&=&\sum_{i} Y_{1}^{i} \otimes \!\cdots \!\otimes Y_{q-1}^{i} \otimes\! \mathcal{Q}_{\mu_{q}, \ldots, \mu_{n}}^{l}\left[Y_{q}^{i}, \ldots, Y_{n}^{i}\right],
\end{eqnarray}
where $q=1,\ldots,n-1$, $\mu_{k} = (u_{k_{1}}, u_{k_2}, \ldots, u_{k_m})^T, k = q, q+1, \ldots, n,$ $u_{k_j} (j = 1, 2, \ldots, m)$are given real numbers,
\begin{eqnarray}\notag
	&&\mathcal{Q}_{\mu_{q}, \ldots, \mu_{n}}^{l}\left[Y_{q}^{i}, \ldots, Y_{n}^{i}\right]\\\notag&=&\left(\begin{array}{c}
		\mu_{q} \\
		\operatorname{Vec}\left(Y_{q}^{i}\right)
	\end{array}\right) \bigotimes_{j=0}^{n-q-1}\left(\mu_{n-j}^T \quad \operatorname{Vec}\left(Y_{n-j}^{i}\right)^{T}\right),
\end{eqnarray}
We have the following separability criterion for multipartite states.

\begin{theorem}\label{th:4}
If a multipartite state  $\rho\in\mathbb{C}^{d_{1}} \otimes \mathbb{C}^{d_{2}} \otimes \cdots \otimes \mathbb{C}^{d_{n}}$ is fully separable, then for any $1\leq{q}\leq{n}-1$,
\begin{eqnarray}\notag
\left\|\mathcal{Q} \mathcal{R}_{\mu_{q}, \ldots, \mu_{n}}^{q}(\rho)\right\|_{\mathrm{T r}} \leq \prod_{k=q}^{n} \sqrt{|\mu_{k}|^{2}+1},
	\end{eqnarray}
where $\mathcal{Q} \mathcal{R}_{\mu_{q},  \ldots, \mu_{n}}^{q}(\rho)$ is defined in (\ref{eq:13}).
\end{theorem}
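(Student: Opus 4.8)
The plan is to mimic the proof of Theorem~\ref{th:1}: reduce to pure product states, use the triangle inequality for the trace norm, and then exploit that the building block $\mathcal{Q}^{l}_{\mu_q,\ldots,\mu_n}[\,\cdot\,]$ is a rank-one matrix whose trace norm factorizes, together with the multiplicativity of the trace norm under tensor products. First I would use full separability to write $\rho=\sum_i p_i\,\rho_1^i\otimes\cdots\otimes\rho_n^i$ with each $\rho_j^i$ a pure state, $p_i\ge 0$, $\sum_i p_i=1$. Since each $\mathcal{Q}^{l}_{\mu_q,\ldots,\mu_n}[\,\cdot\,]$ is multilinear in its matrix arguments (it is assembled from $\operatorname{Vec}$, which is linear, and Kronecker products), the map $\mathcal{Q}\mathcal{R}_{\mu_q,\ldots,\mu_n}^{q}$ is a well-defined linear function of $\rho$, just as the realignment map $\mathcal{R}$ is, and on this decomposition it equals
\[
\mathcal{Q}\mathcal{R}_{\mu_q,\ldots,\mu_n}^{q}(\rho)=\sum_i p_i\Bigl(\bigotimes_{j=1}^{q-1}\rho_j^i\Bigr)\otimes\mathcal{Q}^{l}_{\mu_q,\ldots,\mu_n}\bigl[\rho_q^i,\ldots,\rho_n^i\bigr].
\]
By the triangle inequality it then suffices to bound the trace norm of each summand and sum the bounds against $p_i$.

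Next I would estimate a single summand. Using $\|A\otimes B\|_{\mathrm{Tr}}=\|A\|_{\mathrm{Tr}}\,\|B\|_{\mathrm{Tr}}$ and $\|\rho_j^i\|_{\mathrm{Tr}}=1$ for a pure state, the $i$-th summand has trace norm $p_i\,\|\mathcal{Q}^{l}_{\mu_q,\ldots,\mu_n}[\rho_q^i,\ldots,\rho_n^i]\|_{\mathrm{Tr}}$. Writing $u_k=(\mu_k^{T},\operatorname{Vec}(\rho_k^i)^{T})^{T}$ for $k=q,\ldots,n$ and unwinding the nested Kronecker products in the definition (using $a^{T}\otimes b^{T}=(a\otimes b)^{T}$, $u\otimes v^{T}=uv^{T}$, and (\ref{eq:r4}) for the realignment blocks), I would identify $\mathcal{Q}^{l}_{\mu_q,\ldots,\mu_n}[\rho_q^i,\ldots,\rho_n^i]$ with the rank-one matrix $u_q\,(u_n\otimes u_{n-1}\otimes\cdots\otimes u_{q+1})^{T}$; for $q=n-1$ this is exactly $\mathcal{Q}_{\mu_{n-1},\mu_n}(\rho_{n-1}^i\otimes\rho_n^i)$ of (\ref{eq:m1}). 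Hence its trace norm is $\|u_q\|_2\prod_{k=q+1}^{n}\|u_k\|_2=\prod_{k=q}^{n}\|u_k\|_2$, and since $\rho_k^i$ is pure, $\|\operatorname{Vec}(\rho_k^i)\|_2^2=\operatorname{Tr}((\rho_k^i)^{\dagger}\rho_k^i)=\operatorname{Tr}((\rho_k^i)^2)=1$, so $\|u_k\|_2=\sqrt{|\mu_k|^2+1}$. Thus each summand contributes $p_i\prod_{k=q}^{n}\sqrt{|\mu_k|^2+1}$, and summing over $i$ with $\sum_i p_i=1$ gives $\|\mathcal{Q}\mathcal{R}_{\mu_q,\ldots,\mu_n}^{q}(\rho)\|_{\mathrm{Tr}}\le\prod_{k=q}^{n}\sqrt{|\mu_k|^2+1}$, as claimed.

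The step I expect to be the main obstacle is the purely algebraic identification of $\mathcal{Q}^{l}_{\mu_q,\ldots,\mu_n}[\rho_q^i,\ldots,\rho_n^i]$ with the single outer product $u_q(u_n\otimes\cdots\otimes u_{q+1})^{T}$: one must track the block positions generated by the successive Kronecker products and verify that the realignment blocks collapse to the expected $\operatorname{Vec}$-outer products, so that the whole object is rank one and its trace norm factorizes over $k$. Everything after that identity is the same bookkeeping as in the proof of Theorem~\ref{th:1}. I would also add a sentence justifying the well-definedness of $\mathcal{Q}\mathcal{R}_{\mu_q,\ldots,\mu_n}^{q}(\rho)$ (independence of the product decomposition chosen for $\rho$), which follows from the multilinearity of $\mathcal{Q}^{l}_{\mu_q,\ldots,\mu_n}[\,\cdot\,]$ in exactly the way the realignment $\mathcal{R}$ is itself well defined.
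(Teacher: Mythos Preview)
Your proposal is correct and follows essentially the same approach as the paper: decompose $\rho$ into pure product states, use the triangle inequality, multiplicativity of the trace norm under tensor products, and the fact that $\mathcal{Q}^{l}_{\mu_q,\ldots,\mu_n}[\rho_q^i,\ldots,\rho_n^i]$ is a tensor product of the vectors $u_k=(\mu_k^T,\operatorname{Vec}(\rho_k^i)^T)^T$, each with norm $\sqrt{|\mu_k|^2+1}$ by purity. The only cosmetic difference is that the paper reads off the factorization $\prod_k\|u_k\|$ directly from the tensor-product definition of $\mathcal{Q}^{l}$, so the ``main obstacle'' you anticipate (the rank-one identification) is actually immediate and requires no block tracking.
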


\begin{proof}
Since any fully separable state $\rho$ can be written as
$\rho=\sum_{i} p_{i} \rho_{1}^{i} \otimes \rho_{2}^{i} \otimes \cdots \otimes \rho_{n}^{i}$,
where $p_{i} \in[0,1]$ with $\sum\limits_{i} p_{i}=1$, we have
\begin{eqnarray}\notag
\begin{aligned}
&\quad\left\|\mathcal{Q} \mathcal{R}_{\mu_{q}, \ldots, \mu_{n}}^{ q}(\rho)\right\|_{\mathrm{T r}}\\&=\left\|\sum\limits_{i} p_{i} \rho_{1}^{i} \otimes \cdots \otimes \rho_{q-1}^{i} \otimes \mathcal{Q}_{\mu_{q}, \cdots, \mu_{n}}\left[\rho_{q}^{i}, \ldots, \rho_{n}^{i}\right]\right\|_{\mathrm{T r}}  \\
			&\leq \sum\limits_{i}p_{i}\left\|\rho_{1}^{i} \otimes \cdots \otimes \rho_{q-1}^{i} \otimes \mathcal{Q}_{\mu_{q}, \ldots, \mu_{n}}\left[\rho_{q}^{i}, \ldots, \rho_{n}^{i}\right]\right\|_{\mathrm{T r}}
			\\&=\sum\limits_{i} p_{i}\operatorname{T r}\left(\rho_{1}^{i}\right) \cdots\operatorname{T r}\left(\rho_{q-1}^{i}\right)\prod_{k=q}^{n}\left\|\left(\begin{array}{c}
				\mu_{k} \\
				\operatorname{Vec}\left(\rho_{k}^{i}\right)
			\end{array}\right)\right\|_{\mathrm{T r}}\\
			&=\sum\limits_{i} p_{i}\prod_{k=q}^{n} \sqrt{| \mu_{k}|^{2}+1}\\&=\prod_{k=q}^{n} \sqrt{|\mu_{k}|^{2}+1},
		\end{aligned}
	\end{eqnarray}
which proves Theorem \ref{th:4}.
\end{proof}

\subsection{Lower bounds of GME concurrence for tripartite systems}
The GME concurrence of a pure state $|\varphi\rangle\in\mathbb{C}^{d} \otimes \mathbb{C}^{d} \otimes \mathbb{C}^{d}$ is defined by \cite{ma2011measure},
\begin{eqnarray}\notag
	 C_{G M E}(|\varphi\rangle)=\sqrt{\!\min \left\{1-\operatorname{T r}\left(\rho_{1}^{2}\right), 1-\operatorname{T r}\left(\rho_{2}^{2}\right), 1-\operatorname{T r}\left(\rho_{3}^{2}\right)\right\}},
\end{eqnarray}
where $\rho_{i}$ is the reduced density matrix of subsystem $i$.
The GME concurrence of a mixed state $\rho$ is defined as
\begin{eqnarray}\label{GME:1}
	C_{G M E}(\rho)=\min _{\left\{p_{i},\left|\varphi_{i}\right\rangle\right\}} \sum_{i} p_{i} C_{G M E}\left(\left|\varphi_{i}\right\rangle\right),
\end{eqnarray}
where the minimum is taken over all possible ensemble decompositions of
$\rho=\sum\limits_{i}p_{i}\left|\varphi_{i}\right\rangle\left\langle\varphi_{i}\right|$, $p_{i} \geqslant 0$ with $\quad\sum\limits_{i}p_{i}=1$.
We have the following theorem about the lower bound of GME concurrence, see proof in Appendix C.

\begin{theorem}\label{th:7}
For any tripartite state $\rho\in\mathbb{C}^{d} \otimes \mathbb{C}^{d} \otimes \mathbb{C}^{d}$, we have
\begin{eqnarray}\notag
		\begin{aligned}
			C_{G M E}(\rho)&\geq \frac{1}{\sqrt{d(d-1)}}{\bigg(\mathscr{Q}(\rho)\bigg.}\\&{\bigg.-\sqrt{\left(| \mu|^{2}+1\right)\left(|\nu|^{2}+1\right)}-\frac{2(d-1)}{3}\bigg)},
		\end{aligned}
\end{eqnarray}
where $\mathscr{Q}(\rho)$ is defined in (\ref{M:1}).
\end{theorem}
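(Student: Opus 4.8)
The plan is to mirror the structure of the proofs of Theorems \ref{th:2} and \ref{th:6}: first establish the bound for pure states via a Schmidt-type analysis of each bipartite cut, then lift it to mixed states through the convex-roof definition of $C_{GME}$ together with the additivity/triangle properties in Lemma \ref{P:1}. Concretely, I would start with a pure tripartite state $|\varphi\rangle\in\mathbb{C}^{d}\otimes\mathbb{C}^{d}\otimes\mathbb{C}^{d}$ and look at the three bipartitions $1|23$, $2|13$, $3|12$ separately. For the cut $i|jk$, writing the Schmidt decomposition of $|\varphi\rangle$ across that cut as $\sum_{a}\sqrt{\lambda^{(i)}_a}|a\rangle|a\rangle$, the quantity $1-\operatorname{Tr}(\rho_i^2)=\sum_{a\neq b}\lambda^{(i)}_a\lambda^{(i)}_b$, so $\sqrt{1-\operatorname{Tr}(\rho_i^2)}\ge \frac{1}{\sqrt{d(d-1)}}\cdot 2\sum_{a<b}\sqrt{\lambda^{(i)}_a\lambda^{(i)}_b}$ by the same Cauchy–Schwarz estimate used for $C^2(|\varphi\rangle)$ in the proof of Theorem \ref{th:2} (indeed $\left(\sum_{a<b}\sqrt{\lambda_a\lambda_b}\right)^2\le \binom{d}{2}\sum_{a<b}\lambda_a\lambda_b$). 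Then invoke part (1) of Lemma \ref{le:2} applied to the bipartition $i|jk$: $\big\|\mathscr{Q}_{i|jk}(|\varphi\rangle\langle\varphi|)\big\|_{\mathrm{Tr}}\le \sqrt{(|\mu|^2+1)(|\nu|^2+1)}+2\sum_{a<b}\sqrt{\lambda^{(i)}_a\lambda^{(i)}_b}$, which rearranges to $2\sum_{a<b}\sqrt{\lambda^{(i)}_a\lambda^{(i)}_b}\ge \big\|\mathscr{Q}_{i|jk}(|\varphi\rangle\langle\varphi|)\big\|_{\mathrm{Tr}}-\sqrt{(|\mu|^2+1)(|\nu|^2+1)}$.

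Next I would chain these together. For each $i$ we get $\sqrt{1-\operatorname{Tr}(\rho_i^2)}\ge \frac{1}{\sqrt{d(d-1)}}\big(\|\mathscr{Q}_{i|jk}(|\varphi\rangle\langle\varphi|)\|_{\mathrm{Tr}}-\sqrt{(|\mu|^2+1)(|\nu|^2+1)}\big)$, and since $C_{GME}(|\varphi\rangle)=\sqrt{\min_i\{1-\operatorname{Tr}(\rho_i^2)\}}$ is the minimum, a clean way to bound it below is to use that the minimum of three nonnegative numbers is at least their average minus the right correction; more precisely, using $\min_i x_i \ge \frac{1}{3}\sum_i x_i - \frac{2}{3}\max_i x_i$ is awkward, so instead I would use the simpler route already taken in Theorem \ref{th:6}: bound $\min_i\sqrt{1-\operatorname{Tr}(\rho_i^2)}$ from below by observing $C_{GME}(|\varphi\rangle)\ge \frac{1}{3}\sum_{i}\sqrt{1-\operatorname{Tr}(\rho_i^2)}$ is \emph{false} in general, so the correct device is: since each term satisfies $\sqrt{1-\operatorname{Tr}(\rho_i^2)}\ge \frac{1}{\sqrt{d(d-1)}}(\|\mathscr{Q}_{i|jk}\|_{\mathrm{Tr}}-\sqrt{(|\mu|^2+1)(|\nu|^2+1)})$ \emph{and} by part (2) of Lemma \ref{le:2} each term is at most $\frac{1}{\sqrt{d(d-1)}}\cdot(d-1)=\sqrt{\frac{d-1}{d}}$, one shows $C_{GME}(|\varphi\rangle)=\min_i(\cdots)\ge \frac{1}{3}\sum_i(\cdots) - \frac{2}{3}\sqrt{\tfrac{d-1}{d}}$, which upon multiplying through by $\sqrt{d(d-1)}$ produces exactly $\sqrt{d(d-1)}\,C_{GME}(|\varphi\rangle)\ge \mathscr{Q}(|\varphi\rangle\langle\varphi|)-\sqrt{(|\mu|^2+1)(|\nu|^2+1)}-\frac{2(d-1)}{3}$. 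Here I am using the elementary fact that for $a_1,a_2,a_3\ge 0$ with each $a_i\le B$, one has $\min_i a_i\ge \sum_i a_i - 2B$, applied with $B=\sqrt{(d-1)/d}$; this is the same bookkeeping trick that underlies the $\frac{2(d-1)}{3}$ term in Theorem \ref{th:6}.

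Finally, to pass to mixed states, let $\{p_i,|\varphi_i\rangle\}$ be an optimal decomposition realizing $C_{GME}(\rho)=\sum_i p_i C_{GME}(|\varphi_i\rangle)$. Applying the pure-state bound termwise and then using the linearity from part (1) of Lemma \ref{P:1} together with the triangle inequality for the trace norm (so that $\sum_i p_i\|\mathscr{Q}_{i|jk}(|\varphi_i\rangle\langle\varphi_i|)\|_{\mathrm{Tr}}\ge \|\mathscr{Q}_{i|jk}(\rho)\|_{\mathrm{Tr}}$ for each cut, hence $\sum_i p_i\mathscr{Q}(|\varphi_i\rangle\langle\varphi_i|)\ge \mathscr{Q}(\rho)$), the constant terms pull out with $\sum_i p_i=1$ and we land on the stated inequality. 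The main obstacle I anticipate is getting the $\min$-versus-average step logically airtight: one must be careful that the inequality $\min_i a_i\ge \sum_i a_i-2B$ is only useful because each $a_i$ genuinely has the upper bound $B$ coming from part (2) of Lemma \ref{le:2}, and that this upper bound is compatible (same normalization $\sqrt{d(d-1)}$) with the lower bound coming from part (1); everything else is the now-familiar convexity bookkeeping. A secondary check is that the Schmidt coefficients in the three different cuts are genuinely unrelated, so the three lower bounds must be combined only through the $\min$/average device and not by any identity among the $\lambda^{(i)}$.
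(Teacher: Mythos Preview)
Your proposal is correct and follows essentially the same route as the paper's proof in Appendix~C: bound each $\sqrt{1-\operatorname{Tr}(\rho_i^2)}$ below via Lemma~\ref{le:2}(1) (equivalently Theorem~\ref{th:2}), cap it above by $\sqrt{(d-1)/d}$, combine the three cuts through the min-versus-sum device, and then pass to mixed states via the convex roof and Lemma~\ref{P:1}. One small correction: the upper bound $\sqrt{1-\operatorname{Tr}(\rho_i^2)}\le\sqrt{(d-1)/d}$ that your elementary fact requires comes from the purity inequality $\operatorname{Tr}(\rho_i^2)\ge 1/d$, not from Lemma~\ref{le:2}(2), which only bounds $\|\mathscr{Q}_{i|jk}\|_{\mathrm{Tr}}$ rather than $\sqrt{1-\operatorname{Tr}(\rho_i^2)}$---the paper invokes exactly this purity bound.
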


\begin{example}
Consider the following state $\rho_{x}\in\mathbb{C}^2 \otimes \mathbb{C}^2 \otimes \mathbb{C}^2$,
	$$
	\rho_{x}=\frac{x}{8} I_{8}+(1-x)\left|\phi_{G H Z}\right\rangle\left\langle\phi_{G H Z}\right|, 0 \leq x \leq 1,
	$$
where $\left|\phi_{G H Z}\right\rangle=\frac{1}{\sqrt{2}}(|000\rangle+|111\rangle)$, $I_{8}$ is the $8 \times 8$ identity matrix.
\end{example}

Set $\mu=(1,2)^T$ and $\nu=(2,1)^T$. From Theorem \ref{th:7} we get $C_{G M E}(\rho_{x}) \geq\frac{3\sqrt{2}\left(1-x\right)}{4}+\frac{\sqrt{11\left(x^2-2x+22\right)}}{4}
-\frac{17\sqrt{2}}{6}$. Fig.\ref{fig:5} shows that $\rho_{x}$ is genuine tripartite entangled for $0 \leq x \leq 0.193038$. When $\mu=\nu=(\sqrt{2},\sqrt{2})^T$, Theorem \ref{th:7} reduces to the Theorem $6$ in Ref.\cite{sun2024separability}, which detects the genuine tripartite entanglement of $\rho_{x}$ for $0 \leq x \leq 0.192912$. When $\mu=\nu= 1$, Theorem \ref{th:7} reduces to the Theorem $3$ in Ref. \cite{qi2024detection}, which detects the genuine tripartite entanglement of $\rho_{x}$ for $0 \leq x \textless 0.192758$. It can be seen that our Theorem \ref{th:7} detects better the genuine tripartite entanglement, lower bound of GME concurrence for $\rho_{x}$ in Fig. \ref{fig:5}.

\begin{figure}[h]
	\centering
	\includegraphics[width=90mm]{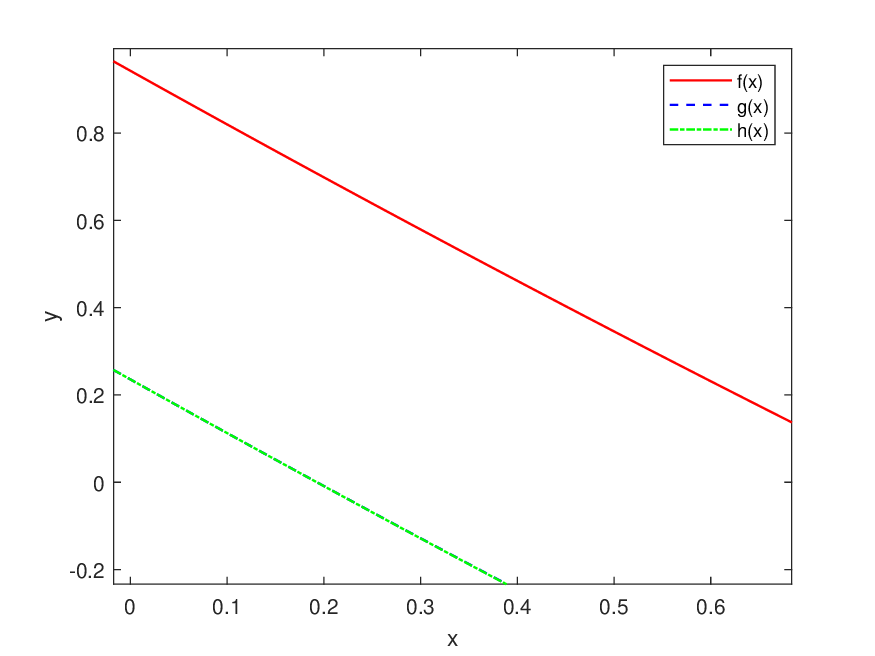}\\
\caption{Lower bound of GME concurrence for $\rho_x$,$f(x)$ is our result from Theorem \ref{th:7} (red solid line), $g(x)$ from Theorem $6$ in Ref.\cite{sun2024separability}  (dash blue line), $h(x)$ from Theorem $3$ in Ref.\cite{qi2024detection} (dash-dotted green line).}
\label{fig:5}
\end{figure}

\section{Conclusions and discussions}\label{S:4}

In this paper, we introduce a new set of separability criteria by constructing matrices derived from the realignment of density matrices. The proposed separability criteria can detect more entanglement than the previous separability criteria. Moreover, we provide criteria for detecting the genuine tripartite entanglement and present lower bounds for the concurrence and convex-roof extended negativity. The advantages of results are demonstrated through detailed examples. Furthermore, the methods presented here can be applied to derive lower bounds for certain multipartite entanglement measures. It is our hope that this work will provide insights and guidance for future research in the field.

\bigskip

\bigskip
\section*{APPENDIX}
\setcounter{equation}{0} \renewcommand%
\theequation{A\arabic{equation}}
\subsection{Proof of Lemma 1}\label{app:1}
\begin{proof}
(1) The vectorization of matrices has the following properties,
\begin{eqnarray}\label{eq:v2}
\operatorname{Vec}\left(k_{1} A+k_{2} B\right)=k_{1} \operatorname{Vec}(A)+k_{2}\operatorname{Vec}(B)
\end{eqnarray}
for any $A, B \in \mathbb{C}^{m \times n}$, $k_{i} \in \mathbb{R}$ $(i=1,2)$ and
\begin{eqnarray}\label{eq:v3}
\operatorname{Vec}(A B C)=\left(C^{T} \otimes A\right) \operatorname{Vec}(B)
\end{eqnarray}
for any $A \in \mathbb{C}^{m \times n}$, $B \in \mathbb{C}^{n \times p}$ and $C \in \mathbb{C}^{p \times q}$.

From (\ref{eq:v2}), it yields that for any $A, B \in \mathbb{C}^{mn \times mn}$, $k_{i} \in \mathbb{R}$ $(i=1,2)$,
\begin{eqnarray}\label{eq:r2}
\mathcal{R}\left(k_{1} A+k_{2} B\right)=k_{1} \mathcal{R}(A)+k_{2} \mathcal{R}(B).
\end{eqnarray}


From the definition of $\mathcal{Q}_{\mu,\nu}^{l}\left(\rho\right)$ in (\ref{eq:m1}), (\ref{eq:v2}) and (\ref{eq:r2}), we get
	\begin{eqnarray}\notag
			&& \mathcal{Q}_{\mu, \nu}\left(\sum_{i=1}^{n} k_{i} \rho_{i}\right)\\ &=&\left(\begin{array}{cc}
				\mu \nu^T & \mu \operatorname{Vec}\left(\operatorname{T r}_{A}\left(\sum\limits_{i=1}^{n} k_{i} \rho_{i}\right)\right)^{T}\\
				\operatorname{Vec}\left(\operatorname{T r}_{B}\left(\sum\limits_{i=1}^{n} k_{i} \rho_{i}\right)\right)\nu ^T & \mathcal{R}\left(\sum\limits_{i=1}^{n} k_{i} \rho_{i}\right)
			\end{array}\right)\notag \\
			& =&\left(\begin{array}{cc}
				\sum\limits_{i=1}^{n} k_{i} \mu \nu^T& \mu \operatorname{Vec}\left(\sum\limits_{i=1}^{n} k_{i}\operatorname{T r}_{A}\left(\rho_{i}\right)\right)^{T} \\
				\operatorname{Vec}\left(\sum\limits_{i=1}^{n} k_{i} \operatorname{T r}_{B}\left(\rho_{i}\right)\right)\nu ^T & \sum\limits_{i=1}^{n} k_{i} \mathcal{R}\left(\rho_{i}\right)
			\end{array}\right)\notag \\
			& =&\left(\begin{array}{cc}
				\sum\limits_{i=1}^{n} k_{i} \mu \nu^T& \mu \sum\limits_{i=1}^{n} k_{i} \operatorname{Vec}\left(\operatorname{T r}_{A}\left(\rho_{i}\right)\right)^{T} \\
				\sum\limits_{i=1}^{n} k_{i} \operatorname{Vec}\left(\operatorname{T r}_{B}\left(\rho_{i}\right)\right)\nu ^T & \sum\limits_{i=1}^{n} k_{i} \mathcal{R}\left(\rho_{i}\right)
			\end{array}\right)\notag \\
			& =&\sum\limits_{i=1}^{n} k_{i}\left(\begin{array}{cc}
				\mu \nu^T& \mu \operatorname{Vec}\left(\operatorname{T r}_{A}\left(\rho_{i}\right)\right)^{T} \\
				\operatorname{Vec}\left(\operatorname{T r}_{B}\left(\rho_{i}\right)\right)\nu ^T & \mathcal{R}\left(\rho_{i}\right)
			\end{array}\right)\notag \\
			& =&\sum\limits_{i=1}^{n} k_{i} \mathcal{Q}_{\mu, \nu}\left(\rho_{i}\right)\notag,
	\end{eqnarray}
which proves (1) of Lemma \ref{P:1}.
	
(2) A general state $\rho$ can be written as \cite{rudolph2005further} $\rho=\sum\limits_{i} \xi_{i} \otimes \eta_{i}$, where $\xi_{i} \in \mathbb{C}^{d_{A} \times d_{A}}$ and $\eta_{i} \in \mathbb{C}^{d_{B} \times d_{B}}$. Denote $\sigma=(U \otimes V) \rho\left(U \otimes V\right)^{\dagger}$. We have
	\begin{eqnarray}\label{eq:m5}
		\mathcal{Q}_{\mu, \nu}(\sigma)=\left(\begin{array}{cc}
			\mu \nu^T & \mu \operatorname{Vec}\left(\operatorname{T r}_{A}(\sigma)\right)^{T} \\
			\operatorname{Vec}\left(\operatorname{T r}_{B}(\sigma)\right)\nu^T & \mathcal{R}(\sigma)
		\end{array}\right),
	\end{eqnarray}
	where
\begin{eqnarray}\notag
		&& \mu \operatorname{Vec}\left(\operatorname{T r}_{A}(\sigma)\right)^{T}
		\\ \notag& =&\mu \operatorname{Vec}\left(\operatorname{T r}_{A}\left((U \otimes V) \sum_{i}\left(\xi_{i} \otimes \eta_{i}\right)\left(U \otimes V\right)^{\dagger}\right)\right)^{T} \\\notag
		& =&\mu \operatorname{Vec}\left(\sum_{i} \operatorname{T r}_{A}\left(U \xi_{i} U^{\dagger} \otimes V \eta_{i} V^{\dagger}\right)\right)^{T} \\\notag
		& =&\mu \operatorname{Vec}\left(\sum_{i} \operatorname{T r}\left(\xi_{i}\right) V \eta_{i} V^{\dagger}\right)^{T}.
\end{eqnarray}
From (\ref{eq:v2}) and (\ref{eq:v3}), it yields that
	\begin{eqnarray}\label{eq:m7}\notag
			&& \mu \operatorname{Vec}\left(\operatorname{T r}_{A}(\sigma)\right)^{T}\\\notag&=&\mu \sum_{i} \operatorname{T r}\left(\xi_{i}\right) \operatorname{Vec}\left(V \eta_{i} V^{\dagger}\right)^{T} \\\notag
			&
			=&\mu \sum_{i} \operatorname{T r}\left(\xi_{i}\right)\left(\left(\left(V^{\dagger}\right)^{T} \otimes V\right) \operatorname{Vec}\left(\eta_{i}\right)\right)^{T} \\\notag
			& =&\mu \sum_{i} \operatorname{T r}\left(\xi_{i}\right)\left((\overline{V} \otimes V) \operatorname{Vec}\left(\eta_{i}\right)\right)^{T} \\\notag
			& =&\mu \sum_{i} \operatorname{T r}\left(\xi_{i}\right) \operatorname{Vec}\left(\eta_{i}\right)^{T}(\overline{V} \otimes V)^{T} \\\notag
			& =&\mu \operatorname{Vec}\left(\sum_{i} \operatorname{T r}_{A}\left(\xi_{i} \otimes \eta_{i}\right)\right)^{T}(\overline{V} \otimes V)^{T} \\
			& =&\mu \operatorname{Vec}\left(\operatorname{T r}_{A}(\rho)\right)^{T}(\overline{V} \otimes V)^{T}.
	\end{eqnarray}
Similarly, we have
	\begin{eqnarray}\label{eq:m8}
		\operatorname{Vec}\left(\operatorname{T r}_{B}(\sigma)\right)\nu^T=(\overline{U} \otimes U) \operatorname{Vec}\left(\operatorname{T r}_{B}(\rho)\right)\nu ^T.
	\end{eqnarray}

Using (\ref{eq:v3}), (\ref{eq:r2}) and (\ref{eq:r4}) we obtain
	\begin{eqnarray}\label{eq:m9}\notag
			&&\mathcal{R}(\sigma)\\\notag
			& =&\sum_{i} \mathcal{R}\left((U \otimes V)\left(\xi_{i} \otimes \eta_{i}\right)\left(U \otimes V\right)^{\dagger}\right) \\\notag
			& =&\sum_{i} \mathcal{R}\left(U \xi_{i} U^{\dagger} \otimes V \eta_{i} V^{\dagger}\right) \\\notag
			& =&\sum_{i} \operatorname{Vec}\left(U \xi_{i} U^{\dagger}\right) \operatorname{Vec}\left(V \eta_{i} V^{\dagger}\right)^{T} \\\notag
			& =&\sum_{i}\left(\left(U^{\dagger}\right)^{T} \otimes U\right) \operatorname{Vec}\!\left(\xi_{i}\right)\!\left(\!\left(\left(V^{\dagger}\right)^{T} \otimes V\right) \operatorname{Vec}\!\left(\!\eta_{i}\!\right)\!\right)^{T} \\\notag
			& =&\sum_{i}(\overline{U} \otimes U) \operatorname{Vec}\left(\xi_{i}\right) \operatorname{Vec}\left(\eta_{i}\right)^{T}(\overline{V} \otimes V)^{T} \\\notag
			& =&\sum_{i}(\overline{U} \otimes U) \mathcal{R}\left(\xi_{i} \otimes \eta_{i}\right)(\overline{V} \otimes V)^{T} \\
			& =&(\overline{U} \otimes U) \mathcal{R}(\rho)(\overline{V} \otimes V)^{T}.
	\end{eqnarray}
Therefore, combing (\ref{eq:m5}), (\ref{eq:m7}), (\ref{eq:m8}) and (\ref{eq:m9}) we have
	\begin{eqnarray}\notag
			&&\mathcal{Q}_{\mu, \nu}(\sigma)\\&=&\left(\!\!\begin{array}{cc}
				\mu \nu^T& \mu \operatorname{Vec}\left(\operatorname{T r}_{A}(\rho)\right)^{T}(\overline{V} \otimes V)^{T} \\
				(\overline{U} \otimes U) \operatorname{Vec}\left(\operatorname{T r}_{B}(\rho)\right)\nu^T & (\overline{U} \otimes U) \mathcal{R}(\rho)(\overline{V} \otimes V)^{T}
			\end{array}\!\!\right)\notag\\\notag
			& =&\widetilde{U}\left(\begin{array}{cc}
				\mu \nu E_{l \times l} & \mu \operatorname{Vec}\left(\operatorname{T r}_{A}(\rho)\right)^{T} \\
				\nu \operatorname{Vec}\left(\operatorname{T r}_{B}(\rho)\right) & \mathcal{R}(\rho)
			\end{array}\right)\widetilde{V}\\\notag
			&=&\widetilde{U} \mathcal{Q}_{\mu, \nu}(\rho) \widetilde{V},
	\end{eqnarray}
where $\overline{U}=\left(U^{\dagger}\right)^{T}$,
$$
\widetilde{U}=\left(\begin{array}{cc}I_{l \times l} & 0 \\ 0 & \overline{U} \otimes U\end{array}\right),~~ \widetilde{V}=\left(\begin{array}{cc}I_{l \times l} & 0 \\ 0 & V \otimes \overline{V}\end{array}\right)^{\dagger}.
$$
Hence,
	\begin{eqnarray}\notag
			&&\left\|\mathcal{Q}_{\mu, \nu}\left((U \otimes V) \rho\left(U \otimes V\right)^{\dagger}\right)\right\|_{\mathrm{T r}}\\\notag&=&\left\|\widetilde{U} \mathcal{Q}_{\mu, \nu}(\rho) \widetilde{V}\right\|_{\mathrm{T r}}\\\notag&=&\left\|\mathcal{Q}_{\mu, \nu}(\rho)\right\|_{\mathrm{T r}},
	\end{eqnarray}
which proves (2) of Lemma \ref{P:1}.
\end{proof}

\subsection{Proof of Lemma 2}\label{app:2}
\begin{proof}
	$\left(1\right)$ Since
$|\varphi\rangle=\sum\limits_{i=0}^{d-1} \sqrt{\lambda_{i}}\left|i_{A} i_{B}\right\rangle$, one has
$$|\varphi\rangle\langle\varphi|=\sum\limits_{i=0}^{d-1} \sum\limits_{j=0}^{d-1} \sqrt{\lambda_{i} \lambda_{j}}\left|i_{A} i_{B}\right\rangle\left\langle j_{A} j_{B}\right|.$$
Hence,
	\begin{eqnarray}\notag
		\mathcal{Q}_{\!\mu, \nu}(|\varphi\rangle\langle\varphi|)\!=\left(\begin{array}{cc}
			\!\!\mu \nu^T & \mu \operatorname{Vec}\left(\operatorname{T r}_{A}(|\varphi\rangle\langle\varphi|)\right)^{T} \\
			\!\!\operatorname{Vec}\left(\operatorname{T r}_{B}(|\varphi\rangle\langle\varphi|)\right)\nu^T & \!\!\mathcal{R}(|\varphi\rangle\langle\varphi|)
		\end{array}\right).
	\end{eqnarray}

Let
	\begin{eqnarray}\notag
		\mathrm{Q}_{1}=\left(\begin{array}{cc}
			\mu \nu^T& \mu M \\
			 M^{T}\nu^T & \Omega_{1}
		\end{array}\right),~~ \mathrm{Q}_{2}=\left(\begin{array}{cc}
			0 & 0 \\
			0 & \Omega_{2}
		\end{array}\right),
	\end{eqnarray}
	where
	\begin{widetext}
		\begin{eqnarray}\notag
			M=(
\lambda_0,\underbrace{0,\cdots,0}_{d-1},0,\lambda_1,\underbrace{0,\cdots,0}_{d-1},\cdots,\lambda_{d-1}
)_{1 \times d^{2}},
		\end{eqnarray}
		\begin{eqnarray}\notag
			\Omega_{1}=\operatorname{diag}(\lambda_{0}, \underbrace{0, \ldots, 0}_{d-1}, 0, \lambda_{1}, \underbrace{0, \ldots, 0}_{d-2}, 0,0, \lambda_{2}, \underbrace{0, \ldots, 0}_{d-3}, \ldots, 0, \ldots, 0, \lambda_{d-1})\in \mathbb{C}^{d^{2} \times d^{2}},
		\end{eqnarray}
		\begin{eqnarray}\notag
			\begin{aligned}
				& \Omega_{2}=\operatorname{diag}\left(0, \sqrt{\lambda_{0} \lambda_{1}}, \ldots, \sqrt{\lambda_{0} \lambda_{d-1}}, \sqrt{\lambda_{1} \lambda_{0}}, 0, \sqrt{\lambda_{1} \lambda_{2}}, \ldots, \sqrt{\lambda_{1} \lambda_{d-1}}\right. \\
				& \left.\sqrt{\lambda_{2} \lambda_{0}}, \sqrt{\lambda_{2} \lambda_{1}}, 0, \sqrt{\lambda_{2} \lambda_{3}}, \ldots, \sqrt{\lambda_{2} \lambda_{d-1}}, \ldots, \sqrt{\lambda_{d-1} \lambda_{0}}, \ldots, \sqrt{\lambda_{d-1} \lambda_{d-2}}, 0\right)\in \mathbb{C}^{d^{2} \times d^{2}}.
			\end{aligned}
		\end{eqnarray}
	\end{widetext}
Using the definition of trace norm we get
	\begin{eqnarray}\notag
		\left\|\mathcal{Q}_{\mu, \nu}(|\varphi\rangle\langle\varphi|)\right\|_{\mathrm{T r}}=\left\|\mathrm{Q}_{1}\right\|_{\mathrm{T r}}+\left\|\mathrm{Q}_{2}\right\|_{\mathrm{T r}},
	\end{eqnarray}
	where $\left\|\mathrm{Q}_{2}\right\|_{\mathrm{T r}}=2 \sum\limits_{0 \leq i<j \leq d-1} \sqrt{\lambda_{i} \lambda_{j}}$. Since $\Omega_{1}=\mathcal{R}\left(\Omega_{1}\right)$ and $\Omega_{1}$ is a separable state, according to Theorem \ref{th:1} we get
	\begin{eqnarray}\notag
		\left\|\mathrm{Q}_{1}\right\|_{\mathrm{T r}}=\left\|\mathcal{Q}_{\mu, \nu}\left(\Omega_{1}\right)\right\|_{\mathrm{T r}} \leq \sqrt{\left(| \mu|^{2}+1\right)\left(|\nu|^{2}+1\right)}.
	\end{eqnarray}
	Therefore,
	\begin{eqnarray}\notag
			&& \left\|\mathcal{Q}_{\mu, \nu}(|\varphi\rangle\langle\varphi|)\right\|_{\mathrm{T r}} \\ \notag&\leq& \sqrt{\left(| \mu|^{2}+1\right)\left(|\nu|^{2}+1\right)}+2 \sum_{0 \leq i<j \leq d-1} \sqrt{\lambda_{i} \lambda_{j}}.
	\end{eqnarray}

$\left(2\right)$
It has been proved in \cite{qi2024detection} that for any $\lambda_{i}>0$, $i=0,1, \ldots, d-1$, such that $\lambda_{0}+\lambda_{1}+\cdots+\lambda_{d-1}=1$, one has
\begin{eqnarray}\notag
		2 \sum\limits_{0 \leq i<j \leq d-1} \sqrt{\lambda_{i} \lambda_{j}} \leq d-1.
\end{eqnarray}
Based on above relation and (1) of Lemma \ref{le:2}, we get
	\begin{eqnarray}\notag
			&&\left\|\mathcal{Q}_{\mu, \nu}(|\varphi\rangle\langle\varphi|)\right\|_{\mathrm{T r}} \\\notag&
			\leq &\sqrt{\left(|\mu|^{2}+1\right)\left(|\nu|^{2}+1\right)}+2 \sum_{0 \leq i<j \leq d-1} \sqrt{\lambda_{i} \lambda_{j}} \\\notag
			& \leq& \sqrt{\left(|\mu|^{2}+1\right)\left(| \nu|^{2}+1\right)}+(d-1).
	\end{eqnarray}
Therefore, we complete the proof of (2) in Lemma \ref{le:2}.
\end{proof}

\subsection{Proof of Theorem 6}\label{app:3}
\begin{proof}
	For a pure state $|\varphi\rangle$ in $\mathbb{C}^{d_{1}} \otimes \mathbb{C}^{d_{2}} \otimes \mathbb{C}^{d_{3}}$, using Theorem \ref{th:2} and (\ref{C:1}) we get
\begin{eqnarray}\label{B:1}
\begin{aligned}
			&\sqrt{1-\operatorname{T r}\left(\rho_{1}^{2}\right)} \geq \frac{1}{\sqrt{d(d-1)}}{\Big(\left\|\mathscr{Q}_{1|23}\left(|\varphi\rangle\langle\varphi|\right)\right\|\bigg.}\\&{\Big.\qquad\quad\qquad\qquad-\sqrt{\left(| \mu|^{2}+1\right)\left(|\nu|^{2}+1\right)}\Big)},
		\end{aligned}
\end{eqnarray}
\begin{eqnarray}\label{B:2}
		\begin{aligned}
			&\sqrt{1-\operatorname{T r}\left(\rho_{2}^{2}\right)} \geq \frac{1}{\sqrt{d(d-1)}}{\Big(\left\|\mathscr{Q}_{2|13}\left(|\varphi\rangle\langle\varphi|\right)\right\|\Big.}\\&{\Big.\qquad\quad\qquad\qquad-\sqrt{\left(\| \mu\|^{2}+1\right)\left(|\nu|^{2}+1\right)}\Big)}
		\end{aligned}
	\end{eqnarray}
and	
\begin{eqnarray}\label{B:3}
		\begin{aligned}
			&\sqrt{1-\operatorname{T r}\left(\rho_{3}^{2}\right)} \geq \frac{1}{\sqrt{d(d-1)}}{\Big(\left\|\mathscr{Q}_{3|12}\left(|\varphi\rangle\langle\varphi|\right)\right\|\Big.}\\&{\Big.\qquad\quad\qquad\qquad-\sqrt{\left(| \mu|^{2}+1\right)\left(|\nu|^{2}+1\right)}\Big)}.
		\end{aligned}
	\end{eqnarray}
Combining (\ref{B:1}), (\ref{B:2}) and (\ref{B:3}) we get
	\begin{eqnarray}\notag
		\begin{aligned}
			& \quad3 \sqrt{d(d-1)} \sqrt{1-\operatorname{T r}\left(\rho_{1}^{2}\right)}-3\mathscr{Q}(|\varphi\rangle\langle\varphi|) \\&\quad+\left(3 \sqrt{\left(| \mu|^{2}+1\right)\left(|\nu|^{2}+1\right)}+2(d-1)\right) \\
			& =3 \sqrt{d(d-1)} \sqrt{1-\operatorname{T r}\left(\rho_{1}^{2}\right)}-\left(\left\|\mathscr{Q}_{1|23}\left(|\varphi\rangle\langle\varphi|\right)\right\|_{\mathrm{T r}}\right.\\&\left.\quad+\left\|\mathscr{Q}_{2|13}\left(|\varphi\rangle\langle\varphi|\right)\right\|_{\mathrm{T r}}+\left\|\mathscr{Q}_{3|12}\left(|\varphi\rangle\langle\varphi|\right)\right\|_{\mathrm{T r}}\right) \\
			& \quad+\left(3 \sqrt{\left(| \mu|^{2}+1\right)\left(|\nu|^{2}+1\right)}+2(d-1)\right) \\
			& \geq 3 \sqrt{d(d-1)} \sqrt{1-\operatorname{T r}\left(\rho_{1}^{2}\right)}\!\!-\!\sqrt{d(d-1)}\left(\sqrt{1-\operatorname{T r}\left(\rho_{1}^{2}\right)}\right.\\&\left.\quad+\sqrt{1-\operatorname{T r}\left(\rho_{2}^{2}\right)}+\sqrt{1-\operatorname{T r}\left(\rho_{3}^{2}\right)}\right)+2(d-1) \\
			& =2 \sqrt{d(d-1)} \sqrt{1-\operatorname{T r}\left(\rho_{1}^{2}\right)}-\sqrt{d(d-1)}
\left(\sqrt{1-\operatorname{T r}\left(\rho_{2}^{2}\right)}
\right.\\&\left.\quad+\sqrt{1-\operatorname{T r}\left(\rho_{3}^{2}\right)}\right)+2(d-1).
		\end{aligned}
	\end{eqnarray}

Since $\sqrt{1-\operatorname{T r}\left(\rho_{1}^{2}\right)} \geq 0$ and $\sqrt{1-\operatorname{T r}\left(\rho_{i}^{2}\right)} \leq \sqrt{1-\frac{1}{d}}(i=2,3)$, we obtain
\begin{eqnarray}\notag
		\begin{aligned}
			& \quad 3 \sqrt{d(d-1)} \sqrt{1-\operatorname{T r}\left(\rho_{1}^{2}\right)}-3\mathscr{Q}(|\varphi\rangle\langle\varphi|)\\&\quad+\left(3 \sqrt{\left(|\mu|^{2}+1\right)\left(|\nu|^{2}+1\right)}+2(d-1)\right) \\
			& \geq2 \sqrt{d(d-1)} \sqrt{1-\operatorname{T r}\left(\rho_{1}^{2}\right)}\!-\!\!\sqrt{d(d-1)}\!\left(\sqrt{1-\operatorname{T r}\left(\rho_{2}^{2}\right)}\right.\\&\left.\quad+\sqrt{1-\operatorname{T r}\left(\rho_{3}^{2}\right)}\right)+2(d-1) \\
			& \geq 2(d-1)-2 \sqrt{d(d-1)} \sqrt{1-\frac{1}{d}}
\\
            & =0.
		\end{aligned}
	\end{eqnarray}
Therefore
	\begin{eqnarray}\notag
		\begin{aligned}
			&\sqrt{1-\operatorname{T r}\left(\rho_{1}^{2}\right)} \geq \frac{1}{\sqrt{d(d-1)}}{\bigg(\mathscr{Q}(|\varphi\rangle\langle\varphi|)\bigg.}\\&{\bigg.\qquad\quad\qquad\qquad-\sqrt{\left(| \mu|^{2}+1\right)\left(|\nu|^{2}+1\right)}-\frac{2(d-1)}{3}\bigg)}.
		\end{aligned}
	\end{eqnarray}
	Similarly, for $i=2,3$, we have
	\begin{eqnarray}\notag
		\begin{aligned}
			&\sqrt{1-\operatorname{T r}\left(\rho_{i}^{2}\right)} \geq \frac{1}{\sqrt{d(d-1)}}{\bigg(\mathscr{Q}(|\varphi\rangle\langle\varphi|)\bigg.}\\&{\bigg.\qquad\quad\qquad\qquad-\sqrt{\left(| \mu|^{2}+1\right)\left(|\nu|^{2}+1\right)}-\frac{2(d-1)}{3}\bigg)}.
		\end{aligned}
	\end{eqnarray}
	
Now let $\left\{p_{i},\left|\varphi_{i}\right\rangle\right\}$ be the optimal decomposition for $\rho$ such that $C_{G M E}(\rho)=\sum\limits_{i} p_{i} C_{G M E}\left(\left|\varphi_{i}\right\rangle\right)$. Then
	\begin{eqnarray}\notag
		\begin{aligned}
			C_{G M E}(\rho)&=\sum\limits_{i} p_{i} C_{G M E}\left(\left|\varphi_{i}\right\rangle\right) \\
			& \geq \frac{1}{\sqrt{d(d-1)}} \sum\limits_{i} p_{i}{\bigg(\mathscr{Q}(|\varphi_{i}\rangle\langle\varphi_{i}|)\bigg.}\\&{\bigg.\quad-\sqrt{\left(| \mu|^{2}+1\right)\left(|\nu|^{2}+1\right)}-\frac{2(d-1)}{3}\bigg)} \\
			& \geq \frac{1}{\sqrt{d(d-1)}}{\bigg(\mathscr{Q}(\rho)\bigg.}\\&{\bigg.\quad-\sqrt{\left(| \mu|^{2}+1\right)\left(|\nu|^{2}+1\right)}-\frac{2(d-1)}{3}\bigg)},
		\end{aligned}
	\end{eqnarray}
which completes proof of Theorem \ref{th:7}.
\end{proof}

\end{document}